\documentclass[lettersize,journal,balance,twocolumn]{IEEEtran}
\usepackage{amsmath,amsfonts}
\usepackage{algorithmic}
\usepackage{algorithm}
\usepackage{array}
\usepackage[caption=false,font=normalsize,labelfont=sf,textfont=sf]{subfig}
\usepackage{textcomp}
\usepackage{stfloats}
\usepackage{url}
\usepackage{verbatim}
\usepackage{graphicx}
\usepackage{cite}

\usepackage[hidelinks]{hyperref}
\usepackage{mathtools}
\usepackage{nicematrix}
\usepackage{balance}
\usepackage{enumitem}
\usepackage{bm}
\usepackage{amsthm}
\usepackage{amssymb}
\usepackage{multirow}
\usepackage{hhline}
\usepackage{physics}
\usepackage{pifont}

\newtheorem{lemma}{Lemma}

\newtheorem{remark}{Remark}

\hypersetup{ %
    breaklinks=true, %
    citecolor=black, %
    colorlinks=true, %
    linkcolor=black, %
    urlcolor=blue
 }

\newcolumntype{P}[1]{>{\centering\arraybackslash}p{#1}}
\newcolumntype{M}[1]{>{\centering\arraybackslash}m{#1}}
\newcommand{\cmark}{\ding{51}}%
\begin{document}

\title{Key Reconciliation with RC-LDPC/Error Estimation for Satellite-based FSO/QKD Systems}

\author{
Cuong T. Nguyen,~\IEEEmembership{Graduate Student Member,~IEEE,} 
Hoang D. Le,~\IEEEmembership{Member,~IEEE,} 
Anshul Jaiswal,~\IEEEmembership{Senior Member,~IEEE,}
Swaminathan R.,~\IEEEmembership{Senior Member,~IEEE,}
and Anh T. Pham,~\IEEEmembership{Senior Member,~IEEE}


\thanks{\textcopyright 2026 IEEE. Personal use of this material is permitted. Permission from IEEE must be obtained for all other uses, in any current or future media, including reprinting/republishing this material for advertising or promotional purposes, creating new collective works, for resale or redistribution to servers or lists, or reuse of any copyrighted component of this work in other works. DOI: \href{https://doi.org/10.1109/TVT.2026.3725740}{10.1109/TVT.2026.3725740}}

\thanks{This work was supported by the Telecommunications Advancement Foundation (TAF) and the Japan Society for the Promotion of Science (JSPS) KAKENHI under Grant Number 24K14918.}

\thanks{Cuong T. Nguyen, Hoang D. Le, and Anh T. Pham are with the Department of Computer Science and Engineering, The University of Aizu, Fukushima 965-8580, Japan (e-mail: cuong98hp@gmail.com; hoangle@u-aizu.ac.jp; pham@u-aizu.ac.jp).}

\thanks{Anshul Jaiswal is with the Department of Electronics and Communication Engineering, IIT Roorkee, Roorkee 247667, India (e-mail: anshul.jaiswal@ece.iitr.ac.in).}

\thanks{Swaminathan R. is with the Department of Electrical Engineering, Indian Institute of Technology Indore, Indore, India (e-mail: swamiramabadran@iiti.ac.in).}

\thanks{The corresponding author: Hoang D. Le}

}

\markboth{IEEE Journal,~Vol.~XX, No.~Y, Aug.~2026}%
{Shell \MakeLowercase{\textit{et al.}}: A Sample Article Using IEEEtran.cls for IEEE Journals}


\maketitle

\begin{abstract}
Satellite-based free-space optics (FSO) quantum key distribution (QKD) systems have recently attracted significant research interest due to their potential to enable globally secured applications. However, the inherent uncertainty of FSO channels, caused by weather conditions and satellite mobility, induces severe fluctuations in quantum bit-error rate (QBER) between legitimate users. This makes designing an efficient key reconciliation, an essential step in the QKD post-processing, particularly challenging. In this work, we propose a key reconciliation scheme that combines protograph rate-compatible (RC) low-density parity-check (LDPC) codes with a syndrome-based error estimation method. The proposed error estimation method reduces the number of communication rounds without requiring additional information disclosure. Furthermore, to our best knowledge, an analytical framework is first developed to evaluate end-to-end secret-key throughput (SKT), accounting for the impact of imperfect error estimation and dynamic FSO channel conditions. Numerical results demonstrate that the proposed scheme consistently outperforms conventional blind reconciliation under diverse FSO channel conditions and provide practical guidelines for system parameter selection. Finally, we validate the proposed framework through a case study that incorporates data from a Starlink low-Earth orbit (LEO) satellite and moving ground vehicles.
\end{abstract}

\begin{IEEEkeywords}
Quantum key distribution, key reconciliation, LEO satellites, FSO channels, protograph LDPC codes, syndrome-based error estimation.
\end{IEEEkeywords}

\section{Introduction}
\label{sec:introduction}

In recent years, satellite-based quantum key distribution (QKD) has attracted growing global interest, as evidenced by missions such as SpeQtral-1 \cite{speqtral}, Q4S mission \cite{q4s}, and QEYSSat \cite{jennewein2023qeyssat}. QKD enables the sharing of information-theoretically secret keys by exploiting quantum principles, making it resilient to advances in computational technologies \cite{scarani2009security}. When integrated with satellite-based free-space optical (FSO) links, QKD can support global, secure, and wireless applications, including the Internet of Vehicles (IoVs). The feasibility of such systems has been demonstrated through landmark experiments, notably Micius \cite{lu2022micius} and SOTA \cite{takenaka2017satellite}. Consequently, satellite-based FSO/QKD is regarded as a cornerstone of the future quantum-secured Internet \cite{trinh2024quantum}.

A QKD system operates in two phases: quantum transmission and post-processing. During the quantum phase, two legitimate users (Alice and Bob) exchange quantum states to generate correlated key materials. In post-processing,  sifting produces correlated but imperfect keys due to quantum channel noise and potential eavesdropping, characterized by the quantum bit-error rate (QBER). These mismatches are corrected during key reconciliation via authenticated public communication \cite{brassard1993secret}. Syndrome-based reconciliation using low-density parity-check (LDPC) codes is widely adopted and forms the focus of this work. As key reconciliation is the most computationally intensive step and often the throughput bottleneck in QKD systems, efficient reconciliation design is critical for satellite-based FSO/QKD deployments.

This paper proposes a blind reconciliation scheme for satellite-based FSO/QKD systems and develops a comprehensive end-to-end secret-key throughput (SKT) framework. The main contributions are summarized as follows:

\begin{itemize}
    \item First, we propose a joint design of blind key reconciliation using protograph rate-compatible LDPC (RC-LDPC) codes combined with a syndrome-based error estimation scheme tailored to satellite-based FSO/QKD systems. While these components have been individually studied, their joint optimization for addressing satellite-specific challenges constitutes the key contribution.
    \item Second, we develop, to the best of our knowledge, the first comprehensive theoretical framework for end-to-end SKT analysis, covering both conventional blind reconciliation and the proposed scheme. The framework incorporates major FSO impairments, including atmospheric turbulence, cloud coverage, and pointing errors, as well as imperfections arising from syndrome-based error estimation. Monte Carlo simulations validate the analytical results.
    \item Furthermore, we characterize the statistical behavior of block-wise QBER by deriving its probability density function (PDF) via curve fitting and evaluating candidate distributions using goodness-of-fit metrics. The resulting PDF is used for system-level performance analysis.
    \item Finally, numerical results provide practical insights through a case study based on Starlink low-Earth-orbit (LEO) satellites and ground-vehicle data, demonstrating the effectiveness of the proposed design.
\end{itemize}

The remainder of this paper is organized as follows. Section~\ref{sec:related_work} reviews the related work and discusses our motivation for this study. Section~\ref{sec:system_desp} describes the overall system, the FSO channel model, and the considered QKD protocol. The details of the proposed design are presented in Section~\ref{sec:kr_design}. Section~\ref{sec:performance_analysis} provides the theoretical analysis of the proposed design in terms of SKT. Numerical results and discussions are given in Section~\ref{sec:results_discuss}. Finally, Section~\ref{sec:conclusion} concludes the paper.

\section{Related Work}
\label{sec:related_work}
\begin{table*}[ht!]
\centering
\caption{Contributions Comparison of Related Work (\cmark: Considered).}
\label{tab:related_works}
\begin{tabular}{| M{1.3cm} | M{1.8cm} | M{1.8cm} | M{1.8cm} | M{1.8cm} | M{2cm} | M{1.8cm} | M{1.8cm} |}
\hline
\multirow{3}{*}{Works} & \multicolumn{2}{ c |}{Considered Systems} & \multicolumn{3}{ c |}{Key Reconciliation Design} & \multicolumn{2}{ c |}{Theoretical Framework} \\ \cline{2-8}
  & Satellite-based QKD System & Practice (e.g., Starlink’s data) & Blind Reconciliation & Protograph LDPC Code & Syndrome-based Error Estimation & Throughput Analysis  & Imperfect Error Estimation  \\ 
\hline
\cite{ai2017short, ai2018quantum, ai2020reconciliation} & \cmark &  & & &  &  &   \\ 
\hline
\cite{martinez2012blind, kiktenko2017symmetric, liu2020blind, kiktenko2021blind, mao2021high, borisov2022asymmetric} &  & & \cmark & &  &  &  \\ 
\hline
\cite{tarable2024rateless} &  & & \cmark & \cmark &  &  &  \\ 
\hline
\cite{kiktenko2018error, gao2019multi} & & &  &  & \cmark &  &   \\ 
\hline
\cite{cuong2024blind} & \cmark & \cmark & \cmark & \cmark & &  &  \\ 
\hhline{|=|=|=|=|=|=|=|=|}
\textbf{This study} & \cmark & \cmark & \cmark & \cmark & \cmark & \cmark & \cmark \\ 
\hline
\end{tabular}
\end{table*}

Table~\ref{tab:related_works} presents related work discussed in this Section in terms of key reconciliation design and theoretical frameworks on performance evaluation.

First, a prime concern in the reconciliation design is the selection of appropriate code rates to correct all errors while minimizing the number of exchanged syndrome bits. This issue becomes especially critical in the satellite-based FSO/QKD systems due to the inherent high latency and the highly fluctuating QBER \cite{capraro2012impact}. To tackle this issue, existing works have focused on developing adaptive-rate reconciliation schemes for satellite-based FSO/QKD systems \cite{ai2017short, ai2018quantum, ai2020reconciliation}. In this design, both legitimate users share a set of LDPC codes with different code rates. A proper code rate will be chosen to correct errors in one communication round based on an accurately estimated QBER. Notably, Xiaoyu \textit{et al.} considered the rate-adaptive reconciliation with short-length LDPC codes for satellite-implemented device-independent QKD systems \cite{ai2017short}. A comparison between fixed-rate and adaptive-rate reconciliation for satellite-based QKD systems was presented in \cite{ai2018quantum}. As expected, the post-decoding QBER of the rate-adaptive approach outperforms that of the fixed-rate approach over a wide range of QBER values. Most recently, the authors in \cite{ai2020reconciliation} investigated a reconciliation strategy to maximize the decoding rate under various constraints for real-time satellite-based QKD systems. 

Existing works \cite{ai2017short, ai2018quantum, ai2020reconciliation} relied on the assumption of an accurate QBER estimator, which is essential for the effective design of adaptive-rate reconciliation schemes. In practice, error estimation is imperfect, and the estimated QBER can be underestimated or overestimated. Both cases can result in unnecessary excess information leakage, especially in the case of underestimation, where a new code rate is adopted. This can sometimes result in the discarding of sifted blocks. To mitigate the effect of imperfect error estimation, an alternative is blind reconciliation \cite{martinez2012blind}. The key difference between this method and the adaptive-rate method is the utilization of the rate-compatible (RC)-LDPC code family. This structure enables the deployment of lower-rate codes by reusing information from higher-rate ones. As a result, if underestimation happens, blind reconciliation can gradually lower the code rate until it reaches the proper level. The initial design of blind reconciliation proposed in \cite{martinez2012blind} considers the shortening and puncturing techniques to construct the RC-LDPC code family. Specifically, Alice first creates an extended key by adding random bits to her sifted key. These bits are treated as punctured and shortened bits during Bob's syndrome decoding. If the decoding fails, Alice reveals additional punctured bits to Bob via the public channel, thus lowering the code rate. This gives Bob a higher chance of successful decoding.

Following the design in \cite{martinez2012blind}, extensive studies on blind reconciliation have been widely investigated for QKD-based optical fiber systems \cite{kiktenko2017symmetric, liu2020blind, kiktenko2021blind, mao2021high, borisov2022asymmetric, tarable2024rateless}. In \cite{kiktenko2017symmetric}, a symmetric blind reconciliation design was proposed to improve the system efficiency compared to the conventional approach. Zhihong \textit{et al.} considered the step-size varying technique in the blind reconciliation design, thus reducing the operation time \cite{liu2020blind}. The authors in \cite{kiktenko2021blind} proposed a blind reconciliation design based on polar codes. Mao \textit{et al.} proposed a method to minimize information leakage by reusing syndrome bits \cite{mao2021high}. In \cite{borisov2022asymmetric}, a key reconciliation design based on blind reconciliation and prior error estimation for industrial QKD systems was proposed. In \cite{tarable2024rateless}, the authors proposed a design and construction method for protograph LDPC codes, specifically for blind reconciliation. Therein, the protograph LDPC code refers to a structured LDPC code class constructed from a small graph with a few elements. Compared to unstructured LDPC codes, this type of LDPC code can offer efficient hardware implementation and flexible rate adaptivity when the QBER fluctuates significantly \cite{cuong2024blind}.

Aside from our report in \cite{cuong2024blind}, the design of the protograph RC-LDPC-based blind reconciliation for satellite-based FSO/QKD has not been investigated in the literature. The design in \cite{cuong2024blind}, nevertheless, does not account for error estimation. Specifically, the reconciliation process will always begin with the highest code rate and gradually decrease the rate until successful decoding is achieved or all code members have been used. In some situations, the design may require multiple communication rounds to achieve the optimal code rate. This becomes especially critical in real-time scenarios where post-processing is conducted directly over the high-latency satellite-to-ground public channel. To tackle this issue, it is necessary to incorporate error estimation into the design.

A common approach is random sampling, in which both sides reveal a fraction of their sifted blocks via the public channel and compare them to estimate QBER. As these bits are known to Eve, they will be discarded afterward \cite{wolf2021quantum}. Thus, this method reduces the key generation rate and may not be suitable when the rate is already low due to high loss over the quantum channel. Moreover, it always requires a round of exchanges to reveal bits, which may be inefficient when the average QBER is low. Another type of scheme utilizes QBER values of previous sifted blocks to estimate the QBER of the current block \cite{borisov2022asymmetric, kiktenko2016post}. However, these methods can be inaccurate for satellite-based QKD links, where fading is more severe than in fiber-based QKD links and the QBER per sifted block fluctuates significantly. Recently, syndrome-based error estimation methods have been proposed and applied to QKD systems \cite{kiktenko2018error, gao2019multi}. The key idea is that the QBER of each sifted block can be estimated based on the exchanged syndrome bits. As shown, this method, when used in conjunction with blind reconciliation, can operate properly over FSO-based QKD links without leaking additional information. As a result, this method presents a potential solution for designing blind reconciliation with protograph RC-LDPC codes in satellite-based FSO/QKD systems.

To comprehensively evaluate the proposal's effectiveness, it is essential to assess end-to-end secret-key throughput (SKT). This metric reflects the average number of secret bits processed by the QKD system per second, considering both the quantum and post-processing phases. Moreover, it is crucial to provide a comprehensive theoretical framework to gain insight into the end-to-end performance of satellite-based FSO/QKD systems. Conducting this task is daunting and not straightforward due to the numerous challenges involved. Additionally, error estimation is not always accurate, necessitating a thorough investigation into the impact of this factor on system performance. To the best of our knowledge, a design and theoretical framework of this nature has not been previously available in the literature, which motivates us to conduct this work.

\begin{figure}[t]
    \centering
    \includegraphics[width=.9\columnwidth]{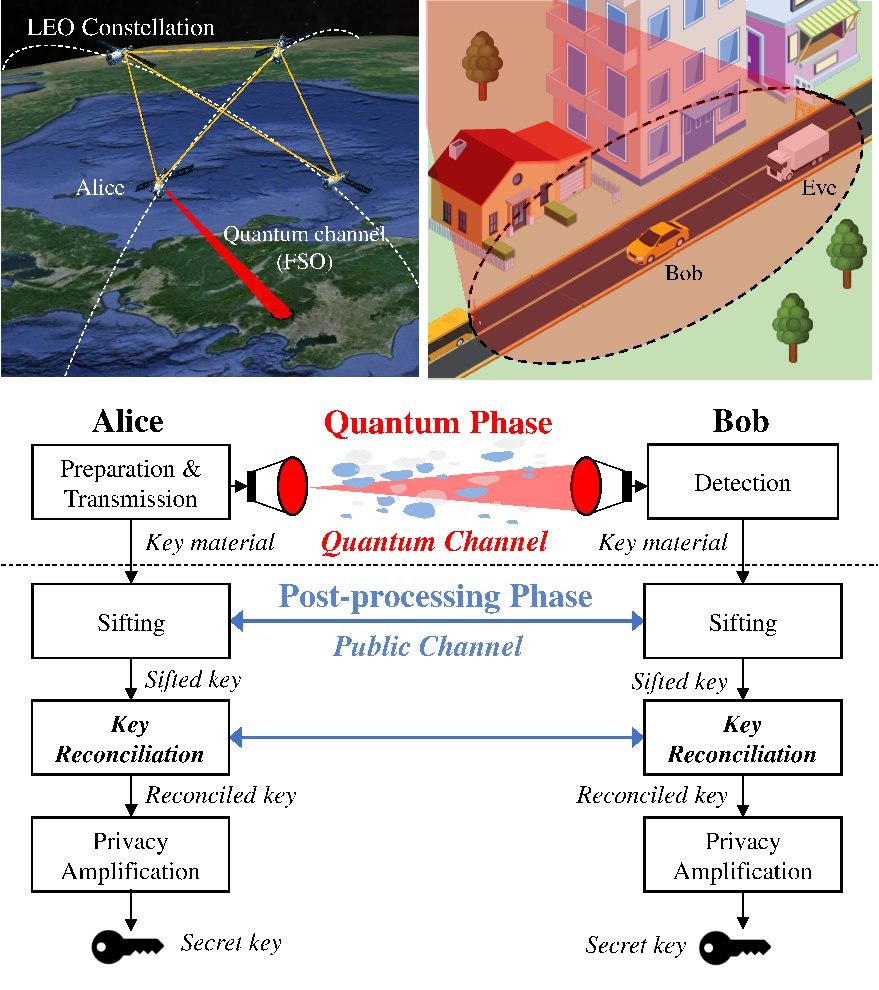}
    \caption{LEO satellite-based FSO/QKD system for secure IoV applications and the block diagram of QKD protocols.}
    \label{fig:system_model}
\end{figure}

\section{System Description}
\label{sec:system_desp}

In this section, we first describe the satellite-based FSO/QKD system supporting secure IoV applications. We then present the LEO-to-vehicle FSO channel model. Finally, we review the considered non-coherent implementation of the CV-QKD protocol.

\subsection{System Model}
\label{sec:system_overview}

Figure~\ref{fig:system_model} presents the LEO satellite-based FSO/QKD system for secure IoV applications. In particular, we focus on the scenario of secure LEO satellite-to-vehicle communications in the context of Internet from space. Here, an LEO satellite (Alice) attempts to distribute secret keys to a ground vehicle (Bob). We consider unauthorized receiver attacks (URA), in which a ground eavesdropper (Eve) attempts to tap downlink quantum signals by being within the beam footprint. Eve can also listen to information exchanged via the public channel.
\begin{figure*}[t]
    \centering
    \includegraphics[width=14 cm]{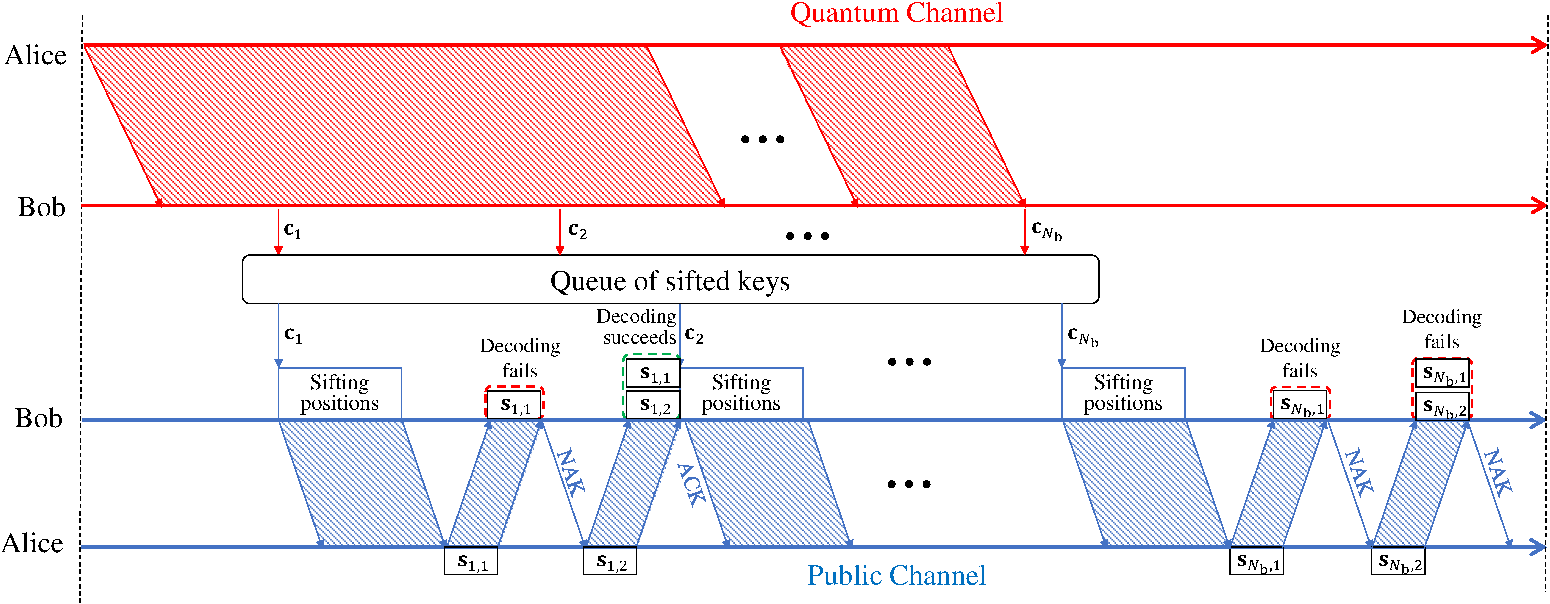}
    \caption{An example of a superframe of the considered system, where the quantum phase is conducted via the quantum channel and the post-processing phase is conducted via the public channel.}
    \label{fig:superframe}
\end{figure*}

The considered system is divided into two phases: (i) {\it quantum phase} and (ii) {\it post-processing phase}. In addition, we divide time into superframes, defined as the time duration during which both sides obtain and reconcile a predefined number of sifted blocks, denoted as $N_\text{b}$. In each superframe, the quantum phase and the post-processing phase are executed simultaneously. This indicates that a superframe begins at the instant Alice starts sharing the key material via the quantum channel. It continues until the last sifted block finishes the reconciliation step via the public channel. Fig.~\ref{fig:superframe} exemplifies a superframe of the desired system, where the quantum phase and the post-processing phase are conducted over the quantum FSO channel and the public channel, respectively.

\textbf{In the quantum phase,} Alice generates key materials and shares them with Bob via an FSO channel. Here, we employ the non-coherent CV-QKD protocol using SIM/BPSK with a dual-threshold detector \cite{trinh2018design} (refer to Sec.~\ref{sec:cv_qkd_dtdd} for a detailed description). We assume that the coherence time of the channel is $t_\text{coh}$. This implies that the fading coefficients $h \left( t \right)$ change every $t_\text{coh}$ seconds. After gathering $n_\text{sift}$ sifted bits, Bob forms a sifted block $F_i$ $\left(1 \leq i \leq N_\text{b} \right)$ and puts it in a first-in-first-out queue. Once Bob collects $N_\text{b}$ sifted blocks, he stops receiving the key materials until the next superframe.

\textbf{In the post-processing phase,} Alice and Bob transform the shared key material into secure keys via several steps, i.e., sifting, key reconciliation, and privacy amplification. These steps are conducted via an authenticated and public channel. The first two steps are executed sequentially and independently on each sifted block in the queue. \textit{Regarding the sifting step,} for each sifted block, Bob transmits information about timestamps, in which he can detect the signals, to Alice. We assume that each timestamp is a real value composed of 64 bits \cite{almeida2024classical}. As a result, the amount of information needed to be transmitted via the public channel for the sifting step is $n_\text{sp} = 64 n_\text{sift}$. After receiving this information, Alice forms her sifted block accordingly.

\textit{Regarding the key reconciliation step,} we consider the blind reconciliation with protograph LDPC code and error estimation, which is detailed in Sec.~\ref{sec:kr_design}. If the reconciliation fails, both sides discard their sifted blocks. Otherwise, they store their reconciled blocks in the buffer. \textit{As for the privacy amplification step,} the purpose of this step is to remove information potentially leaked to Eve throughout the procedure. A common method is compressing the reconciled key into a shorter yet more secure key using a shared hash function chosen from a universal hash family \cite{li2018memory}. Particularly, after gathering enough reconciled blocks, $N_\text{sb}$, both sides concatenate all of them to construct a long block of length $n_\text{PA} = n_\text{sift} N_\text{sb}$ bits. They then independently perform universal hashing on the constructed block to obtain the secure key \cite{milicevic2017key}. The assumptions used in this paper are as follows: (i) the public channel is assumed to be authenticated and error-free\footnote{Authentication over the public channel can be guaranteed by attaching verification hash digests \cite{kiktenko2016post}, while reliability is ensured via robust error-correction codes \cite{le2023fso}.}, and (ii) the probability that the reconciliation fails and that this failure goes undetected by the parties is negligible\footnote{This probability can be reduced to an arbitrary level by exchanging the hash digests of the reconciled blocks \cite{leverrier2010finite}.}.
\subsection{FSO Channel Model}
\label{sec:channel_model}

As for the LEO-to-vehicle FSO channel model, we consider three major impairments\footnote{ Note that existing optical receiver design can effectively compensate for the Doppler effect between an LEO satellite and a moving ground vehicle \cite{cuong2024toward}.}, i.e., cloud attenuation $h_\text{c}$, atmospheric turbulence $h_\text{t}$, and pointing errors-induced fluctuations $h_{\text{p}, X}$, where $X \in \left\{\text{B, E} \right\}$ is used to distinguish between the coefficients of Bob and Eve, respectively. Then, the composite channel coefficient is given as $h_X = h_\text{c} h_\text{t} h_{\text{p}, X}$. Details of these impairments are as follows.

\begin{figure*}[t]
    \centering
    \includegraphics[width=\textwidth]{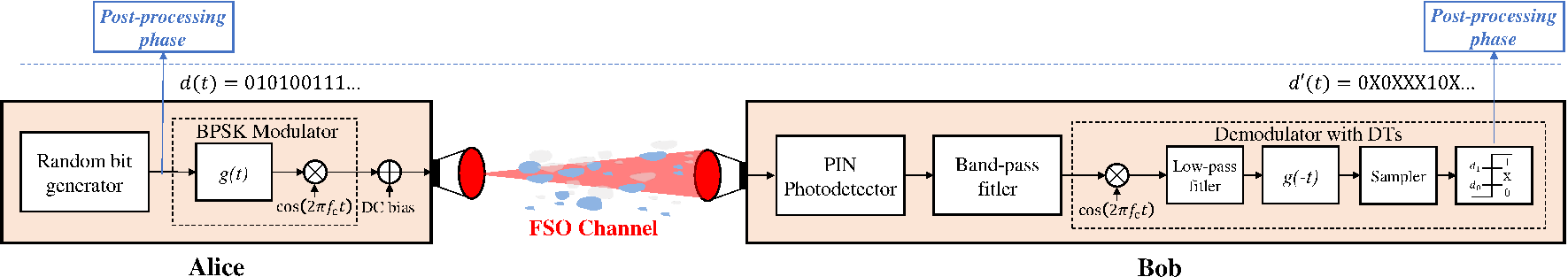}
    \caption{Block diagram of the free-space CV-QKD protocol using SIM/BPSK and a dual-threshold detector implemented on the quantum phase.}
    \label{fig:phy_layer}
\end{figure*}

\subsubsection{Cloud Attenuation}
\label{sec:cloud_attenuation}
Due to the presence of liquid water particles in clouds, optical signals may experience a scattering effect, resulting in a reduction in optical power. According to \cite{ghassemlooy2019optical}, the cloud attenuation coefficient is given as
\begin{align}
    \label{eqn:cloud_attenuation}
    h_\text{c} = \exp \left[ -\sigma H_\text{c} \sec \left( \xi \right) \right],
\end{align}
where $H_\text{c}$ is the vertical extent of clouds, and $\xi$ is the zenith angle. Moreover, $\sigma$ is the attenuation coefficient expressed as a function of the optical wavelength $\lambda$ and the visibility $V$ \cite[(12)]{le2023fso}. In addition, $V = 1.002 / \left( M_\text{c} N_\text{c} \right)^{0.6473}$, where $M_\text{c}$ $\left( \text{g/m}^{-3} \right)$ is the cloud liquid water content (CLWC), and $N_\text{c}$ $\left( \text{cm}^{-3} \right)$ is the cloud droplet number concentration \cite{le2023fso}.

\subsubsection{Atmospheric Turbulence}
\label{sec:turbulence}
This phenomenon causes the scintillation effect, resulting in fluctuations in the received power at the receiver detector. To model a wide range of turbulence strengths, we consider the Gamma-Gamma (GG) model, whose probability density function (PDF) is given as \cite{andrews2005laser}
\begin{align}
    \label{eqn:gg_fading}
    f_{\text{GG}} \left( h_\text{t} \right) = \frac{2 \left( \alpha \beta \right)^{\frac{\alpha + \beta}{2}}}{\Gamma \left( \alpha \right) \Gamma \left( \beta \right)} h_\text{t}^{\frac{\alpha + \beta}{2} - 1} K_{\alpha - \beta} \left(2 \sqrt{\alpha \beta h_\text{t}} \right), 
\end{align}
where $\Gamma \left( . \right)$ is the Gamma function, and $ K_{\alpha - \beta} \left( . \right)$ presents the modified Bessel function of the second kind of order $\alpha - \beta$. Moreover, the scale parameters $\alpha$ and $\beta$ denote the effective number of large-scale and small-scale cells of the scattering process in the atmosphere, respectively. These parameters can be calculated based on the Rytov variance $\sigma_\text{R}^{2}$, as in \cite[(15)]{le2023fso}.

\subsubsection{Generalized Pointing Error Model}
\label{sec:generalized_PE}
Due to the geometric spread of the optical beam, the detector can only collect a fraction of the received power, which is approximated as \cite{farid2007outage}
\begin{align}
    \label{eqn:h_p}
    h_{\text{p}, X} \approx A_0 \exp \left( - \frac{2 \bm{\rho}_X^2}{w_{L_\text{eq}}^2} \right),
\end{align}
where $\bm{\rho}_X$ is the radial displacement vector between the centers of the beam footprint and that of Bob/Eve's detector, $A_0 = \left[ \text{erf} \left( v \right) \right]^2$ is the fraction of the collected power given $\rho = 0$, $v = \left( \sqrt{\pi} r_\text{a} \right) / \left(\sqrt{2} w_{L, \text{eff}} \right)$, $r_\text{a}$ is the receiver aperture radius, $\text{erf} \left( \cdot \right)$ denotes the error function, $w_{L_\text{eq}} = \sqrt{w_{L, \text{eff}}^2 \frac{\sqrt{\pi} \text{erf} \left( v \right)}{2 v \exp \left( - v^2\right)}}$ is the equivalent beam width. Moreover, $w_{L, \text{eff}} \approx w_0 \sqrt{\left(1 - \frac{L}{F_0} \right)^2 + \left( \frac{2 L}{k_\text{wave} w_0^2}\right)^2}$ is the effective beam width, where $L = \left( H_\text{s} - H_\text{v} \right) \sec \left( \xi \right)$ is the slant distance, $H_\text{s}$ is the satellite altitude, $H_\text{v}$ is the vehicle altitude, $F_0$ represents the phase font radius of curvature, $k_\text{wave} = 2 \pi / \lambda$ denotes the wave number, $w_0 = \frac{\lambda}{\pi \theta}$ is the beam waist radius, and $\theta$ is the divergence half-angle \cite{andrews2005laser}.

\textit{As for the LEO-to-Bob channel,} the main causes for the pointing errors between the center of the satellite's optical beam on the ground and that of Bob’s detector are \textit{(1) the satellite’s vibration} and \textit{(2) the constant changes in Bob's velocity that the satellite cannot keep track of in a short period of time.} Therein, we consider the most general case, where the displacements in the $x$ and $y$ axes are two independent Gaussian random variables, with different means ($\mu_x, \mu_y$) and variances ($\sigma_x^2, \sigma_y^2$). In other words, $\bm{\rho}_\text{B} = \left( \rho_{\text{B}, x}, \rho_{\text{B}, y} \right)$, where $\rho_{\text{B}, x} \sim \mathcal{N} \left( \mu_x, \sigma_x^2 \right)$ and $\rho_{\text{B}, y} \sim \mathcal{N} \left( \mu_y, \sigma_y^2 \right)$. As a result, the radial displacement $\rho_\text{B} = \left\lVert \bm{\rho}_\text{B} \right\rVert$ follows the Beckmann distribution, which can be approximated by a modified Rayleigh distribution as \cite{boluda2016novel}
\begin{align}
    \label{eqn:approx_Beckmann_pdf}
    f_{\rho_\text{B}} \left( \rho_\text{B} \right) = \frac{\rho_\text{B}}{\sigma_\text{mod}^2}\exp\left(-\frac{\rho_\text{B}^2}{2\sigma_\text{mod}^2} \right), \quad \rho_\text{B} > 0,
\end{align}
where $\sigma_\text{mod} = \left(\frac{3\mu_x^2\sigma_x^4+3\mu_y^2\sigma_y^4+\sigma_x^6+\sigma_y^6}{2} \right)^{1/3}$ is the modified variance approximation. Using \eqref{eqn:h_p} and \eqref{eqn:approx_Beckmann_pdf}, we derive the PDF of $h_\text{p}$ as
\begin{align}
    \label{eqn:approx_h_p_pdf}
    f_{h_\text{p,B}} \left( h_\text{p,B} \right) = \frac{\varphi_\text{mod}^2}{A_\text{mod}^{\varphi_\text{mod}^2}} h_\text{p,B}^{\varphi_\text{mod}^2-1}, \quad 0 \leq h_\text{p,B} \leq A_\text{mod},
\end{align}
where $A_\text{mod} = A_0 \exp\left(\frac{1}{\varphi_\text{mod}^2}-\frac{1}{2\varphi_x^2}-\frac{1}{2\varphi_y^2}-\frac{\mu_x^2}{2\sigma_x^2\varphi_x^2}-\frac{\mu_y^2}{2\sigma_y^2\varphi_y^2}\right)$, $\varphi_\text{mod}$, $\varphi_y$, and $\varphi_x$ are the ratio between the equivalent beam waist at Bob's receiver and the jitter standard deviation, the jitter variances in the $x$ and $y$ axes, respectively. They can be written as $\varphi_\text{mod} = w_{L_\text{eq}}/2\sigma_\text{mod}$, $\varphi_x = w_{L_\text{eq}}/2\sigma_x$ and $\varphi_y = w_{L_\text{eq}}/2\sigma_y$.

\textit{For Eve's pointing error model,} we assume that Eve's detector is on the same receiver plane and $d_\text{E}$ meters from Bob's receiver. In this case, the misalignment conditions at Bob also affect Eve. Particularly, let $\textbf{P}$ denote the coordinate of the center of Bob's detector, and $\textbf{Q} = \begin{bmatrix} X' \\ Y' \end{bmatrix} + \textbf{P}$, denoting the coordinate of the center of the misaligned beam footprint, where $X'$ and $Y'$ are the beam displacements in the $x$ and $y$ axes, respectively. The distance between the center of Eve's detector and the center of the misaligned beam footprint, $\rho_\text{E}$, is given as $r_{\text{E}}^2 = \|\mathbf{Q}\|^2 = \left\| \begin{bmatrix} X' \\ Y' \end{bmatrix} \right\|^2 + 2 \begin{bmatrix} X' \\ Y' \end{bmatrix}^T \mathbf{P} + \|\mathbf{P}\|^2 = \rho_\text{B}^2 +2 \begin{bmatrix} X' \\ Y' \end{bmatrix}^T \mathbf{P} + d^2_\text{E}$.
Consequently, the pointing error coefficient of Eve can be approximated as \cite{trinh2020secrecy}
\begin{align}
     h_\text{p,E} \approx { A_{\text{mod}}}\exp \left ({ -\frac {2 \rho_\text{B}^2}{ w_{L_\text{eq}}^2 }}\right) \exp \left ({-\frac {2 d_\text{E}^2}{ w_{L_\text{eq}}^2 } }\right) \exp \left ({-U }\right),
\end{align}
where $U = \frac{4}{w_{L_\text{eq}}^2} \begin{bmatrix} X' \\ Y' \end{bmatrix}^T \textbf{P}$, which is a zero-mean normal random variable with the variable $\sigma^2_U = \frac{16 \sigma_\text{mod}^2 d_\text{E}^2}{w_{L_\text{eq}}^4}$.
\subsubsection{Composite Channel Model}
\label{sec:composite_channel_pdf}

By approximating the GG distribution as the mixture Gamma distribution, the composite PDF channel of Bob is derived as \cite{sandalidis2016tractable}
\begin{align}
    \label{eqn:composite_pdf_Bob}
    f_{h_\text{B}} \!\left( {{ h_{\text{B}}} } \right)\! =\! \chi h_\text{B}^{{ \varphi _{\text {mod}} ^{2}}-1} \!\sum _{i=1}^{n} \! a_{i} \xi_{i}^{{ \varphi _{\text {mod}}^{2}} - \alpha }\Gamma \!\left( \! {\alpha \! - \! \varphi_\text{mod}^2, \frac{\xi_i h_\text{B}}{{ A_{\text{mod}} h_\text{c}}}} \! \right),
\end{align}
where $\chi =   \varphi_\text{mod}^2 \left({{ A_{\text {mod}} h_\text{c}} }\right)^{-\varphi _{\text {mod}} ^{2}} $, $\Gamma \left( \cdot, \cdot \right)$ is the upper incomplete Gamma function, $a_i = \frac{\theta_i}{\sum_{j=1}^n \theta_j \Gamma \left( \alpha \right) \xi_j^{-\alpha}}$, $\xi_i = \frac{\alpha \beta}{x_i}$, $\theta_i = \frac{\left( \alpha \beta \right)^{\alpha} w_i x_i^{-\alpha+\beta-1}}{\Gamma \left( \alpha \right) \Gamma \left( \beta \right)}$, $n$ is the Gauss-Laguerre approximation order, $w_i$ and $x_i$ are respectively the weight factors and abscissa of the Laguerre polynomials.

Regarding Eve's channel model, based on \cite[Appendix B]{trinh2020secrecy}, we can derive its numerical approximation as
\begin{align}
    \label{eqn:composite_pdf_Eve}
    f_{{ h_{\text {E}}}} \left( {{ h_{\text {E}}} } \right)=& B_0' \sum_{k=1}^{m} C_k D_k^\beta \exp \left( {-D_k { h_{\text{E}}}} \right)  h_\text{E}^{\beta - 1},
\end{align}
where $B_0' = \sqrt{2} \sigma_{G'} B_{1}'\exp \left ({-{ \varphi_\text{mod}^2} B_2' }\right) \left[ \Gamma \left( \beta \right) \right]^{-1}$, $B_1' = 0.5\varphi_\text{mod}^2 \exp \left( 0.5\varphi_\text{mod}^4 \sigma_{G'}^2 - \varphi_\text{mod}^2 \mu_{G'} \right)$, $B_2' =\varphi_\text{mod}^2 \sigma_{G'}^2 - \mu_{G'}$, $\mu_{G'} = - 0.5 \ln \left( \frac{\alpha + 1}{\alpha} \right)$, and $\sigma_{G'}^2 = \ln \left( \frac{\alpha + 1}{\alpha} \right) + \frac{16 \sigma_\text{mod}^2 d_\text{E}^2}{w_{L_\text{eq}}^4}$, Moreover, $C_k = w_k \text{erfc} \left( x_k \right) \exp \left( x_k^2 + \sqrt{2} \sigma_{G'} \varphi_\text{mod}^2 x_k \right)$, $D_k = \beta \left[ A_\text{mod} h_\text{c} \exp \left( \sqrt{2} \sigma_{G'} x_k \! - \! \frac{2 d_\text{E}^2}{w_{L_\text{eq}}^2} \! - \! B_2' \right) \right]^{-1}$, $m$ is the Gauss-Hermite approximation order, $w_k$ and $x_k$ are the weight factors and abscissa of the Hermite polynomials, respectively.

\subsection{CV-QKD with Dual-threshold/Direct Detection (DT/DD)}
\label{sec:cv_qkd_dtdd}

As for the QKD protocol implemented in the quantum phase, we consider the non-coherent CV-QKD protocol using SIM/BPSK with a dual-threshold detector. This protocol, inspired by the BB84 protocol, offers a solution with low complexity, low implementation cost, and compatibility with conventional communication devices \cite{trinh2018design}. It is, nevertheless, important to note that the proposed scheme can also be employed for the key reconciliation of any QKD systems.

Fig.~\ref{fig:phy_layer} illustrates the block diagram of the considered system. Particularly, Alice first modulates random binary bits into SIM/BPSK signals with a small modulation depth $\delta$ $\left(0 < \delta < 1 \right)$. The modulated signals are then transmitted over the optical satellite link with a fading channel coefficient $h \left( t \right)$ and noise. At the receiver, Bob uses a PIN photodiode to directly detect the optical signals and convert them into electrical signals. The demodulated electrical signal after the matched filter is $s \left( t \right) = \left(2 i - 1 \right) \frac{1}{4} \mathfrak{R} \delta P_\text{t} h \left( t \right) + n \left( t \right)$, where $i$ denotes the transmitted bit ($ i \in \left\{ 0, 1 \right\}$), $\mathfrak{R}$ denotes the responsivity of the detector, $P_\text{t}$ is the peak transmitted power, $n \left( t \right)$ denotes the received noise. Bob detects the received signal based on dual thresholds (DTs), i.e., $d_0$ and $d_1$, as $b \left( t \right) = \begin{cases}
        0 &  \text{if } s \left( t \right) \leq d_0, \\
        1 &  \text{if } s \left( t \right) \geq d_1, \\
        \text{X} & \text{otherwise},
\end{cases}$ where $b \left( t \right)$ denotes the result of detection at time $t$, 'X' represents the case that Bob can not detect the signal. The values of DTs are set as $d_i = \left( 2 i - 1 \right) \left( \frac{1}{4} \mathfrak{R} \delta P_\text{t} h_\text{c} \frac{A_\text{mod} \varphi_\text{mod}^2}{1 + \varphi_\text{mod}^2} +  \zeta \sqrt{\sigma^2_\text{n}} \right)$ \cite{trinh2020secrecy}, where $\zeta$ denotes the DT scale coefficient, and $\sigma^2_\text{n}$ is the variance of the received noise.

Regarding the received noise $n \left( t \right)$, it can be modeled as zero-mean additive white Gaussian noise (AWGN), whose variance can be written as $\sigma^2_\text{n} = \sigma^2_\text{sh} + \sigma^2_\text{bg} + \sigma^2_\text{th}$ \cite{trinh2018design}. Herein, $\sigma^2_\text{sh}$, $\sigma^2_\text{bg}$, and $\sigma^2_\text{th}$ are the variances of the shot noise, background noise, and thermal noise, respectively. \textit{For the shot noise,} $\sigma^2_\text{sh} = 2 q \mathfrak{R} \left( \frac{1}{4} P_\text{t} \delta h \right) \Delta f$, where $\Delta f = \frac{R_\text{b}}{2}$ denotes the effective noise bandwidth, $R_\text{b}$ is the data rate of the FSO channel. \textit{Regarding the background noise,} $\sigma^2_\text{bg} = 2 q \mathfrak{R} P_\text{b} \Delta f$, where $q$ is the electron charge, $P_\text{b} = \Omega \pi r^2_\text{a} \frac{B_0 \lambda^2}{c}$ denotes the average received background radiation power, $\Omega$ represents the Sun's spectral irradiance above the atmosphere, $B_0$ denotes the optical bandwidth, $c$ is the speed of light. \textit{As for the thermal noise,} $\sigma^2_\text{th} = \frac{4 k_\text{B} T F_\text{n}}{R_\text{L}} \Delta f$, where $k_\text{B}$ denotes the Boltzmann constant, $T$ denotes the receiver temperature in Kelvin degree, $F_\text{n}$ is the amplified noise figure, and $R_\text{L}$ is the load resistance.

\section{Proposed Method}
\label{sec:kr_design}
This section first describes the proposed blind reconciliation design, considering the protograph RC-LDPC code and the syndrome-based error estimation. Then, details of the syndrome-based error estimation method are provided.

\begin{figure}[t]
    \centering
    \includegraphics[width=\linewidth]{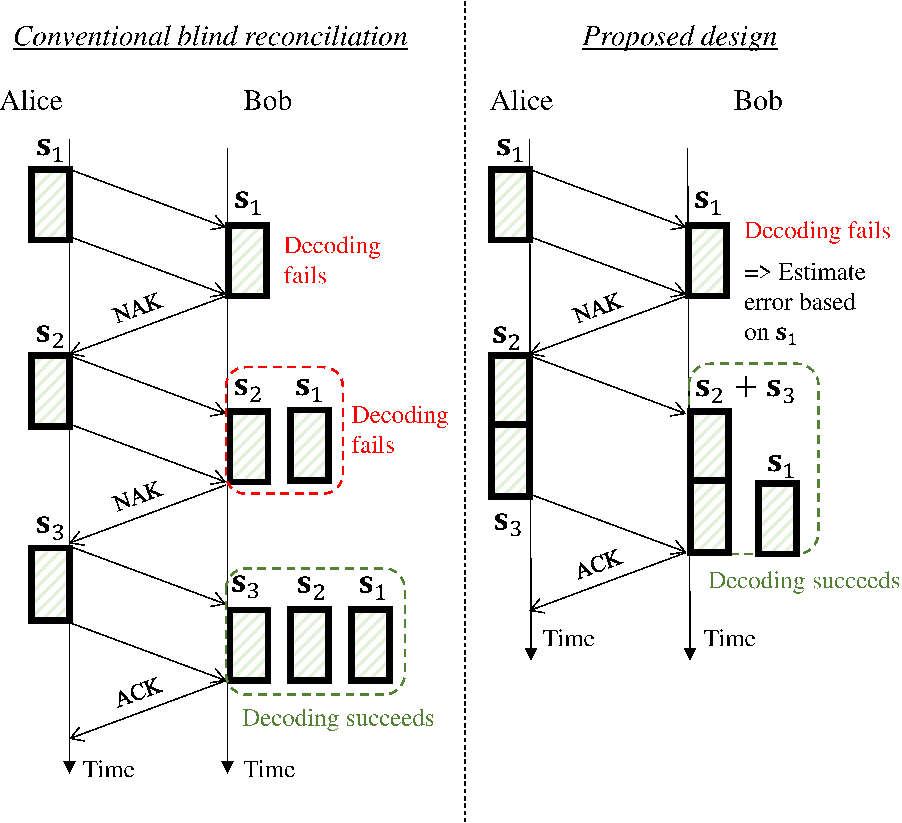}
    \caption{The operations of the conventional blind reconciliation and the proposed design.}
    \label{fig:cmp_conventional_proposal}
\end{figure}

\subsection{Blind Reconciliation Design}
\label{sec:proposed_kr_overview}

\subsubsection{Review of Blind Reconciliation}
\label{sec:rv_blind_KR}

Blind reconciliation uses the RC-LDPC code family, which allows the use of syndrome bits from high-rate codes to decode lower-rate ones. At the beginning of the procedure, both users start from the highest code rate of the family. If a reconciliation attempt fails, more syndrome bits are sent via the public channel. At the receiver, these syndrome bits are combined with the previous ones to lower the code rate, thereby improving the chances of successful reconciliation. The protocol continues until the sifted block is successfully decoded, or the parties have tried all possible code rates. In the latter case, both sides discard their sifted blocks.

However, in case the average QBER is high and the initial code rate fails most of the time, the protocol may take many communication rounds to reach the appropriate code rate. This becomes exacerbated in satellite communications, where the propagation delay is high (in the order of milliseconds). To circumvent this problem, we additionally consider the syndrome-based error estimation method \cite{lechner2013estimating}. The key idea of the proposed design is that the receiver can utilize the initial syndrome to estimate the QBER and share this information with the sender. Based on the estimated QBER, the sender can adapt the code rate by adjusting the number of syndrome bits in the next communication round. {Fig.~\ref{fig:cmp_conventional_proposal} describes the operations of the conventional blind reconciliation and the proposed design.}

\subsubsection{Protograph-based Rate-compatible (RC)-LDPC Code Family}
\label{sec:rc_ldpc_family}

To facilitate the operation of blind reconciliation, we consider the RC-LDPC code family constructed from protographs. A protograph refers to a bipartite graph with a small number of nodes that can be used as a prototype to create parity-check matrices of various sizes. This also enables the efficient design and optimization of LDPC codes at the protograph level \cite{divsalar2009capacity}.

The protomatrix of the desired RC-LDPC code family can be structured by the code extension method. In this method, the protomatrix of a low-rate code is constructed by adding new rows to that of a high-rate code and exhaustively searching for the one with the lowest decoding threshold \cite{cuong2024blind}. This results in the RC-LDPC code family, whose structure is shown in Fig.~\ref{fig:proposed_structure}. Therein, $N_\text{r}$ denotes the number of LDPC codes in the family, $1 > R_1 > R_2 > ... > R_{N_\text{r}}$ represents the code rates, and $\textbf{H}_i \left( 1 \leq i \leq N_\text{r} \right)$ denotes the corresponding parity-check matrix. An arbitrary matrix $\textbf{H}_{i+1}$ is a combination of a higher-rate matrix $\textbf{H}_{i}$ and a number of rows, forming a sub-matrix $\textbf{H}^{\Delta}_{i+1}$. This type of structure allows the construction of the syndrome vector $\textbf{s}_{i+1}$ by adding more syndrome bits to $\textbf{s}_{i}$, as illustrated in Fig.~\ref{fig:proposed_structure}.

\begin{figure}[t]
    \centering
    \includegraphics[width=\linewidth]{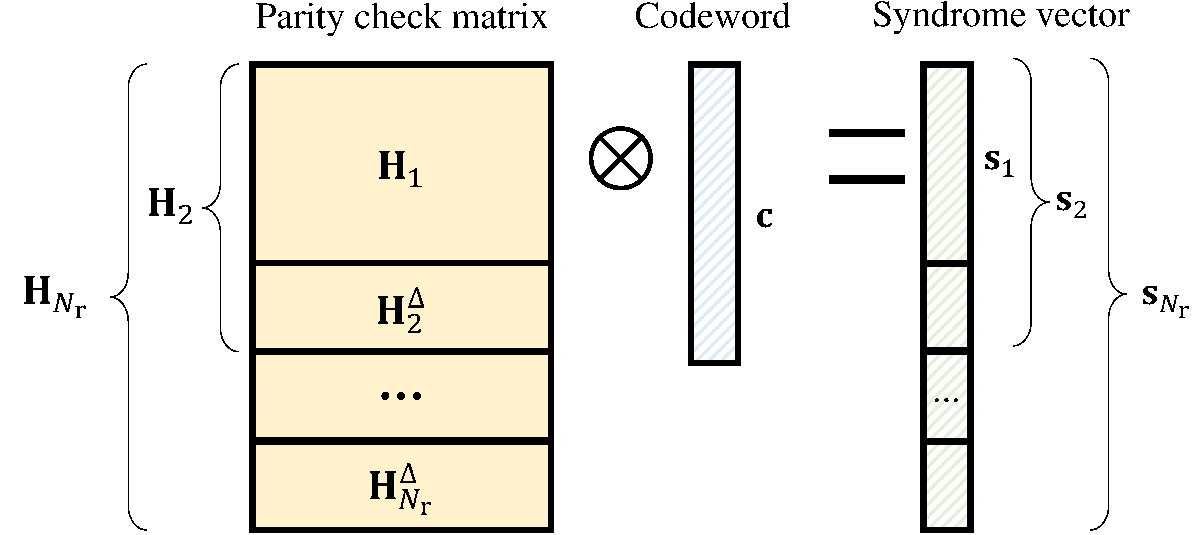}
    \caption{Illustration of the proposed RC-LDPC code family and how created syndrome vectors can be nested.}
    \label{fig:proposed_structure}
\end{figure}

\subsubsection{Operation Steps}
\label{sec:operation_step}

Let $\textbf{k}_\text{A}$ and $\textbf{k}_\text{B}$ represent the sifted blocks of Alice and Bob, respectively. 
The detailed steps of the proposed design are as follows:
\begin{itemize}
    
    \item \textbf{Step 1:} Alice creates the initial syndrome, $\textbf{s} = \textbf{k}_\text{A} \textbf{H}_1^\text{T}$, and transmits it to Bob via the public channel.

    \item \textbf{Step 2:} Bob decodes his sifted block according to syndrome $\textbf{s}$ using the modified decoding algorithm \cite{liveris2002compression}. The modified decoding algorithm operates in the same way as conventional decoding algorithms, except that it takes into account the syndrome vector $\textbf{s}$. In particular, Bob tries to decode $\textbf{k}_\text{B}$ into $\hat{\textbf{k}_\text{B}}$ so that $\hat{\textbf{k}_\text{B}} \textbf{H}_1^\text{T} = \textbf{s}$. If the sifted block is decoded successfully after a predefined number of iterations, Bob informs Alice, and they save the reconciled blocks. Otherwise, they move to the next step.

    \item \textbf{Step 3:} Bob utilizes the syndrome $\textbf{s}$ to estimate the QBER of his sifted key and select a proper code rate in the RC-LDPC code family. Details of the error estimation method are described in Section~\ref{sec:err_est_synrome}. He then shares the index of the code rate, denoted as $\epsilon$, with Alice.

    \item \textbf{Step 4:} Alice sets $i = \epsilon$ and computes the incremental syndrome $\Delta \textbf{s}$ as
    \begin{align}
        \label{eqn:delta_syndrome}
        \Delta \textbf{s} =  \textbf{k}_\text{A} \left[ \left( \textbf{H}^{\Delta}_2 \right)^\text{T} \; \dots \; \left( \textbf{H}^{\Delta}_\epsilon \right)^\text{T} \right].
    \end{align}
    
    \item \textbf{Step 5:} Bob obtains the new syndrome $\textbf{s}_\epsilon$ by concatenating the syndrome $\textbf{s}$ and $\Delta \textbf{s}$. He then tries another decoding attempt with the newly obtained syndrome. If the decoding attempt is successful, the procedure completes. Otherwise, they move to the next step.

    \item \textbf{Step 6:} Alice checks the condition $i = N_\text{r}$. If it is true, both sides discard their sifted blocks. Otherwise, Alice sets $i = i + 1$, computes a new incremental syndrome $\Delta \textbf{s} = \textbf{k}_\text{A} \left( \textbf{H}^{\Delta}_i \right)^\text{T}$, and transmits it to Bob. The procedure goes back to step 5.
\end{itemize}

\subsection{Syndrome-based Error Estimation}
\label{sec:err_est_synrome}

The considered syndrome-based QBER estimation method is based on the method proposed in \cite{lechner2013estimating} that utilizes syndromes of regular LDPC codes. This approach can be applied to protograph LDPC codes thanks to the fact that in the parity-check matrix, all rows lifted from a row in the corresponding protomatrix possess the same row weight. 

Let $\rho$ denote the bit-flipping probability between the considered sifted block of Alice and Bob, $\textbf{H}$ be an $m \times n$ parity-check matrix lifted from a protomatrix $\textbf{B}$, and $\textbf{s}_\text{A}$ and $\textbf{s}_\text{B}$ be the syndromes created from $\textbf{H}$ and the sifted blocks of Alice and Bob, respectively. We denote $\textbf{s}_{\text{A}, d} \subseteq \textbf{s}_\text{A}$ and $\textbf{s}_{\text{B}, d} \subseteq \textbf{s}_\text{B}$ as the vector of syndrome bits created from rows with the same weight $d$. According to Remark~\ref{remark:protograph_weight}, the length of $\textbf{s}_{\text{A}, d}$ and $\textbf{s}_{\text{B}, d}$, denoted as $m_d$, have the minimum value equaling the lifting factor, $n_\text{sift}$.
\begin{remark}
    \label{remark:protograph_weight}
    Let $\textbf{w}$ be an arbitrary row in the protomatrix $\textbf{B}$, whose weight is $d$. In the parity-check matrix $\textbf{H}$, there are at least $n_\text{\rm lift}$ rows that have the weight of $d$, where $n_\text{\rm lift}$ is the lifting factor.
\end{remark}

\begin{figure}
    \centering
    \includegraphics[width=7.6cm]{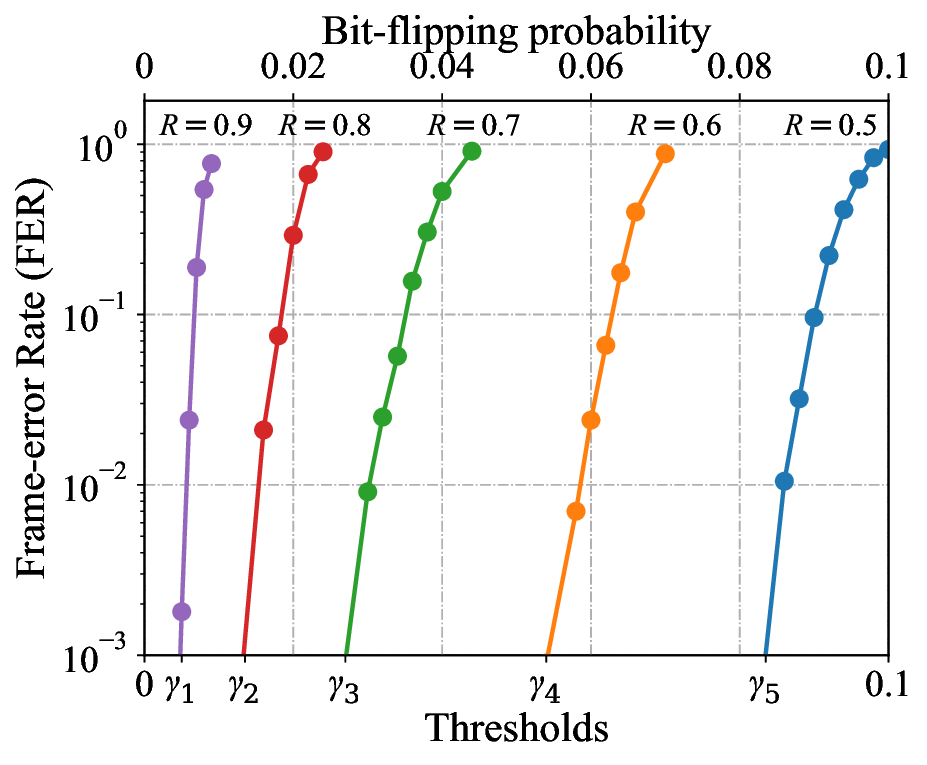}
    \caption{FER performance of selected code rates with different values of bit-flipping probabilities (upper y-axis). The thresholds for code rate determination are shown on the lower y-axis, where $\left\{ \gamma_1, \gamma_2, \gamma_3, \gamma_4, \gamma_5 \right\} = \left\{ 0.005, 0.0135, 0.027, 0.054, 0.0835 \right\}$.}
    \label{fig:thresholds_10k}
\end{figure}

Furthermore, let $ 0 < \gamma_2 < \cdots < \gamma_\text{th} \leq 0.5$ represent the thresholds that Alice uses to determine the code rate. In this work, we select the thresholds so that the frame-error rate (FER) is below a predefined level, i.e., $10^{-3}$. For analytical purposes, this paper determines the thresholds based on the empirical performance of LDPC codes. For instance, let's consider the FER performance of an RC-LDPC code family with the code members' rates $\{0.9, 0.8, 0.7, 0.6, 0.5\}$, which is shown in Fig.~\ref{fig:thresholds_10k}. The parity-check matrix of this family is created by lifting its protomatrix, $\textbf{B}_{1/2}$, twice, resulting in the final block length of $10,000$ bits. Details of the two-step lifting and the construction of $\textbf{B}_{1/2}$ can be found in \cite{cuong2024toward} and Appendix~\ref{app:code_design_example}, respectively. As seen, the thresholds of each code member are selected as $\left\{ \gamma_1, \gamma_2, \gamma_3, \gamma_4, \gamma_5 \right\} = \left\{ 0.005, 0.0135, 0.027, 0.054, 0.0835 \right\}$, corresponding to the points where their FER plots are $10^{-3}$.

Details of the error estimation method and code rate selection are as follows:
\begin{itemize}
    
    \item \textbf{Step 1:} Bob selects the minimum row weight value among all rows of protomatrix $\textbf{B}$, $d_\text{min}$\footnote{This is because the higher value of $d$ leads to the higher mean squared error of the estimator \cite{lechner2013estimating}.}, and creates two vectors, i.e., $\textbf{s}_{\text{A}, d_\text{min}}$ and $\textbf{s}_{\text{B}, d_\text{min}}$ based on $\textbf{s}_\text{A}$ and $\textbf{s}_\text{B}$, respectively. Then he computes the difference between the two vectors as $\Delta \textbf{s} = \textbf{s}_{\text{A}, d_\text{min}} \oplus \textbf{s}_{\text{B}, d_\text{min}}$.

   \item \textbf{Step 2:} Bob estimates the probability that a syndrome is flipped, $q$, as $\hat{q} = \frac{w}{m_{d_\text{min}}}$,     where $w$ is the weight of $\Delta \textbf{s}$.

    \item \textbf{Step 3:}  Given $\rho \in \left[ 0, 0.5 \right]$, the estimated value of $\rho$, denoted as $\hat{\rho}$, is expressed as \cite{lechner2013estimating}
    \begin{align}
    \hat{\rho} =  
        \begin{dcases}
            \phi^{-1} \! \left( \! w \! \right) = \frac{1 \! - \! \left( \! 1 \! - \! 2  \frac{w}{m_{d_\text{min}}} \right)^\frac{1}{d}}{2}, & \text{if } \hat{q} < 1/2,\\
            \frac{1}{2}, &\text{if } \hat{q} \geq 1/2,
        \end{dcases}
    \end{align}
    where $\phi \left( \rho \right) = q$ is the theoretical bit-flipping probability of the syndrome and can be modeled as a Bernoulli process \cite{lechner2013estimating, kiktenko2018error}. Therefore, the expression of $\phi \left( \cdot \right)$ is written as
    \begin{align}
        \label{eqn:syn_bit_one}
        \phi \left( \rho \right) = \!\!\!\!\!\! \displaystyle\sum^{d}_{\substack{i=1 \\
                                            i \equiv 1 \left( \mathrm{mod} 2 \right) }}
        \!\!\!\!\!\!\left(
        \begin{array}{c}
          d \\
          i
        \end{array}
      \right) \rho^i \left(\! 1 \! - \!\rho\! \right)^{d - i} = \frac{1 \!- \! \left( 1 \! - \! 2 \rho \right)^d}{2}.
    \end{align}

    \item \textbf{Step 4:} Bob selects the smallest integer $\epsilon \in \left[ 2, N_\text{r} \right]$ so that $\hat{\rho} \leq \gamma_\epsilon$. If no value is satisfied, he sets $\epsilon = N_\text{r}$. Finally, he announces to Alice the value of $\epsilon$ and finishes the procedure.
\end{itemize}

Our proposed design offers three key advantages in implementation complexity and practical deployment compared to existing schemes. \textit{First,} existing blind reconciliation designs \cite{martinez2012blind, kiktenko2017symmetric, liu2020blind, mao2021high, borisov2022asymmetric} rely on unstructured LDPC codes. Their highly random parity-check matrices lead to high hardware implementation complexity. In contrast, the parity-check matrix of protograph LDPC codes inherits the structure of the underlying protograph, enabling highly efficient hardware decoder implementation \cite{divsalar2009capacity}. \textit{Secondly,} most existing works utilize shortening and puncturing techniques to construct RC-LDPC codes, which restricts their cover range due to constraints on the number of auxiliary bits \cite{kiktenko2021blind}. Conversely, code members in a protograph RC-LDPC family are designed and optimized at the protograph level. This allows a single code family to cover a wide range of QBER \cite{tarable2024rateless, cuong2024blind}. \textit{Finally,} our error estimation method is considerably simpler than those for irregular LDPC codes \cite{kiktenko2018error, gao2019multi}, which require exhaustive searches. By utilizing protograph LDPC codes with fixed row weights for a given number of rows, our approach substantially reduces computational complexity.

\section{Performance Analysis}
\label{sec:performance_analysis}
This section first provides the mathematical framework for the conventional blind reconciliation, i.e., blind reconciliation without error estimation, in terms of the end-to-end SKT. Based on that, we develop the framework for the proposed design, accounting for the effects of imperfect error estimation. Finally, the statistical distribution of QBER is investigated to complete the framework.

\subsection{Conventional Blind Reconciliation}
\label{sec:final_key_rate}
In this section, we consider the end-to-end SKT, which is defined as the average number of secret bits produced by the system per second. The SKT of the conventional blind reconciliation can be expressed as
\begin{align}
    \label{eqn:skr}
    \text{SKT} = \frac{R_\text{sec}}{T_\text{sf}} = \frac{N_\text{b} n_\text{sift} \displaystyle\sum_{i=1}^{N_\text{r}} P^{\left( i \right)}_{\text{succ}} \left( \beta_i I_\text{AB} - I_\text{E} \right)}{\overline{\varepsilon}_\text{Q} + \left( N_\text{b} - 1 \right) \text{max} \left[ \overline{\varepsilon}_\text{Q}, \overline{\varepsilon}_\text{P} \right] + \overline{\varepsilon}_\text{P}},
\end{align}
where $R_\text{sec}$ is the average number of secret bits per superframe, $T_\text{sf}$ is the average duration of a superframe, $P^{\left( i \right)}_{\text{succ}}$ is the percentage of simulated frames corrected by $i$-th code rate and can be calculated as in \eqref{eqn:P_succ},
$n_\text{sift}$ is the length of a sifted block in bits, $N_\text{r}$ denotes the maximum number of code rates in the family, $I_\text{AB}$ represents the mutual information between the sifted key of Alice and that of Bob, $\beta_i = \frac{C_i}{I_\text{AB}}$ is the reconciliation efficiency, $C_i$ is the $i$-th code rate, and $I_\text{E}$ denotes the measure of the information that the eavesdropper can obtain over the quantum channel, whose calculation is shown in the next paragraph.
Furthermore, $\overline{\varepsilon}_\text{Q}$ is the average time to share a sifted key over the quantum channel and given in \eqref{eqn:avg_t_qp}, and $\overline{\varepsilon}_\text{P}$ is the average time to process a sifted key over the public channel and given in \eqref{eqn:avg_t_pp}.

In this paper, we consider the URA, where Eve attempts to gain information about the secret key by eavesdropping and detecting the signals from Alice using a detection threshold $d_\text{E, th} = 0$. As a result, Eve derives her own sifted key, which is partially correlated with Alice's. This procedure can be equivalently viewed as the data transmission over the binary symmetric channel. Therefore, the mutual information between the sifted key of Alice and that of Eve is $I_\text{E} = 1 - H \left( p_\text{e} \right)$, where $H \left( \cdot \right)$ is the binary entropy function, $p_\text{e}$ is the probability that Eve detects Alice’s transmitted bits incorrectly \cite{trinh2020secrecy}. Moreover, $\overline{\varepsilon}_\text{Q}$ can be expressed as
\begin{align}
    \label{eqn:avg_t_qp}
    \overline{\varepsilon}_\text{Q} = \frac{n_\text{sift}}{R^\text{FSO}_\text{b} P_\text{sift}},
\end{align}
where $R^\text{FSO}_\text{b}$ is the data rate of the FSO channel, and $P_\text{sift}$ denotes the probability that Bob can detect bits '0' and '1' according to the rule in Sec.~\ref{sec:cv_qkd_dtdd}. The computation of $p_\text{e}$ and $P_\text{sift}$ can be obtained via Lemma~\ref{LEM:P_SIFT_QBER}.

\begin{figure*}[!t]
\normalsize
    \begin{align}
        \label{eqn:closed_form_P_sift}
        P_\text{sift} \approx \varphi_\text{mod}^2 
        \displaystyle\sum_{a, b \in \left\{ 0, 1 \right\} }
        \displaystyle\sum^{n'}_{j=1}
        \displaystyle\sum_{i=1}^{n} w_j \exp \left( x_j \right) Q \left[ \frac{ \Upsilon' \left(a, b, x_j \right) }{ \sigma_{\text{n}, k} \left( x_j \right) } \right]  x_j^{ \varphi_\text{mod}^2 - 1}  a_{i} \xi_{i}^{{ \varphi _{\text {mod}}^{2}} - \alpha }\Gamma \left( {\alpha -  \varphi_\text{mod}^2, 
        \xi_i x_j
        } \right).
    \end{align}

    \begin{align}
        \label{eqn:closed_form_ber_eve}
        p_\text{e} \approx B_0' \left( h_\text{c} A_\text{mod}\right)^{\beta} \displaystyle\sum_{i=1}^{n} \displaystyle\sum_{k=1}^{m} w_i \exp \left( x_i \right) Q \left[ \frac{ 0.25 \mathfrak{R} \delta P_\text{t} x_i h_\text{c} A_\text{mod}}{ \sigma_{\text{n}, k} \left( x_i \right) } \right]   C_k D_k^\beta \exp \left( {-D_k x_i h_\text{c} A_\text{mod} } \right)  x_i^{\beta - 1}.
    \end{align}
\hrulefill
\vspace*{4pt}
\end{figure*}

\begin{lemma}
\label{LEM:P_SIFT_QBER}
     The average sift probability at Bob, $P_\text{sift}$, and the average error probability at Eve, $p_\text{e}$ can be approximated as in \eqref{eqn:closed_form_P_sift} and \eqref{eqn:closed_form_ber_eve}, respectively.
     In this regard, $Q \left( \cdot \right)$ is the Q-function, $\sigma_{\text{n}, k} \left( x_k \right) = \sqrt{2 q \mathfrak{R} \left( \frac{1}{4} P_\text{t} \delta h_\text{c} A_\text{mod} x_k + P_\text{b} \right) \Delta f + \sigma^2_\text{th} }$, $\Upsilon' \left(a, b, x_k \right) = \left( 1 - 2 b \right) i_{a, k} + \left( 2 b - 1 \right) d_{b, k}$, $i_{a, k} = \frac{\left( 2a - 1 \right)}{4} \mathfrak{R} \delta P_\text{t} x_k h_\text{c} A_\text{mod}$, $d_{b, k} =  \left( 2 b - 1 \right) \left( \frac{1}{4} \mathfrak{R} \delta P_\text{t} h_\text{c} \frac{A_\text{mod} \varphi_\text{mod}^2}{1 + \varphi_\text{mod}^2} +  \zeta \sqrt{\sigma^2_{\text{n}, k}} \right)$, $n'$ is the Gauss-Laguerre approximation order, $w_j$ and $x_j$ are the weight factors and abscissa of the Laguerre polynomials, respectively.
\end{lemma}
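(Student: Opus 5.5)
We condition on the composite channel coefficient, write each target probability as a Gaussian tail (a $Q$-function) in the instantaneous coefficient, and then average over the fading densities \eqref{eqn:composite_pdf_Bob} and \eqref{eqn:composite_pdf_Eve}. Because the resulting integrals have no closed form, we evaluate them with Gauss--Laguerre quadrature after a normalizing change of variables.

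\emph{Sift probability.} For $P_\text{sift}$, fix $h_\text{B}$ and a transmitted bit $a\in\{0,1\}$, each sent with probability $1/2$. The matched-filter output is then Gaussian with mean $i_a=\frac{2a-1}{4}\mathfrak{R}\delta P_\text{t}h_\text{B}$. Its variance $\sigma_\text{n}^2(h_\text{B})$ comes from the shot, background and thermal terms in Sec.~\ref{sec:cv_qkd_dtdd}, with the shot-noise power evaluated at $h_\text{B}$.

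Bob keeps a bit whenever $s\le d_0$ or $s\ge d_1$. Indexing the two thresholds by $b\in\{0,1\}$, each event has probability $Q\big([(1-2b)i_a+(2b-1)d_b]/\sigma_\text{n}\big)$, which gives the combination $\Upsilon'$. The sum over $a,b$ therefore yields the conditional sift probability, up to the $1/2$ prior, which must be tracked carefully against the stated constant.

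We then integrate against $f_{h_\text{B}}$ in \eqref{eqn:composite_pdf_Bob} and substitute $x=h_\text{B}/(A_\text{mod}h_\text{c})$. Under this substitution the prefactor $\chi$ becomes $\varphi_\text{mod}^2$ times the Jacobian-compensated power $x^{\varphi_\text{mod}^2-1}$. The argument of $\Gamma(\alpha-\varphi_\text{mod}^2,\cdot)$ becomes $\xi_i x$, and $h_\text{B}$ inside $i_a$ and $\sigma_\text{n}$ becomes $h_\text{c}A_\text{mod}x$, which explains the definitions of $i_{a,k}$ and $\sigma_{\text{n},k}$.

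The domain of $x$ is formally $[0,1]$, because $h_\text{p}\le A_\text{mod}$ is convolved with the turbulence term, but after convolution with Gamma--Gamma fading the support of $h_\text{B}$ is $[0,\infty)$. We therefore write $\int_0^\infty g(x)\,dx=\int_0^\infty e^{-x}\,[e^{x}g(x)]\,dx\approx\sum_j w_j e^{x_j}g(x_j)$. This produces the factor $w_j\exp(x_j)$ and the double sum over $j$ and $i$ in \eqref{eqn:closed_form_P_sift}.

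\emph{Eve's error probability.} For $p_\text{e}$ we proceed the same way. Eve uses threshold $0$, so conditioned on $h_\text{E}$ her BPSK error probability is $Q\big(\tfrac14\mathfrak{R}\delta P_\text{t}h_\text{E}/\sigma_\text{n}(h_\text{E})\big)$, and by symmetry this is the same for both bits. We average against \eqref{eqn:composite_pdf_Eve} and substitute $h_\text{E}=h_\text{c}A_\text{mod}x$. The Jacobian together with $h_\text{E}^{\beta-1}$ produces $(h_\text{c}A_\text{mod})^{\beta}x^{\beta-1}$, and the exponential becomes $\exp(-D_k x h_\text{c}A_\text{mod})$. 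Applying the same Laguerre quadrature with the $e^{x_i}$ compensation gives \eqref{eqn:closed_form_ber_eve}, with the Hermite sum over $k$ inherited from the density.

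\emph{Main obstacle.} The main difficulty is bookkeeping rather than analysis. We must check that the change of variables is consistent with the normalization constants $\chi$, $B_0'$ and the ``$\approx$'' inherited from the mixture-Gamma and Hermite approximations. We must also justify applying Laguerre quadrature to an integrand that is multiplied by $e^{x}$; this is acceptable because the $Q$-function and the incomplete Gamma (or exponential) factors decay fast enough that $e^{x}g(x)$ is well approximated by polynomials on the quadrature nodes. A final point to address is that the threshold $d_b$ uses the noise variance evaluated at the node, giving $\sigma_{\text{n},k}$ in place of the average variance; this should be stated as part of the approximation.
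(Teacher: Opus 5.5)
Your proposal is correct and follows the paper's proof essentially step for step: express $P_\text{sift}=\sum_{a,b}P_{\text{A},\text{B}}(a,b)$ and $p_\text{e}=P_{\text{A},\text{E}}(0,1)+P_{\text{A},\text{E}}(1,0)$ as $Q$-function averages over \eqref{eqn:composite_pdf_Bob} and \eqref{eqn:composite_pdf_Eve}, rescale by $h_\text{c}A_\text{mod}$, and apply Gauss--Laguerre quadrature with the $e^{x_j}$ compensation. Your caution about the prior is well founded: the paper's own proof carries $P_\text{A}(a)=0.5$ into every term, so the correct sift probability is $\tfrac{1}{2}$ times the expression displayed in \eqref{eqn:closed_form_P_sift} (as displayed, that expression tends to $2$ when the signal is strong), whereas for $p_\text{e}$ the two equal halves add up and no such factor remains.
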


\begin{proof}
    Please see Appendix~\ref{sec:proof_p_sift_qber}
\end{proof}

\begin{remark}
    The approximation \eqref{eqn:closed_form_P_sift} quickly converges to the exact values with the approximation order of $n' = 25$ and $n = 20$. For \eqref{eqn:closed_form_ber_eve}, the approximation orders needed for convergence are around $n = m = 50$.
\end{remark}

Furthermore, $\overline{\varepsilon}_\text{P}$ can be computed as
\begin{align}
    \label{eqn:avg_t_pp}
    \overline{\varepsilon}_\text{P} = t_\text{prop} + t^\text{sifting}_\text{trans} + \displaystyle\sum_{i=1}^{N_\text{r}} P^{\left( i \right)}_{\text{succ}} D_i + P_{\text{fail}} D_{N_\text{r}},
\end{align}
where $t_\text{prop} = \frac{L}{c}$ is the propagation delay, $t^\text{sifting}_\text{trans}$ denotes the transmission delay of sifting information, and can be calculated as
\begin{align}
    \label{eqn:delay_trans_sifting}
    t^\text{sifting}_\text{trans} = \frac{n_\text{sp}}{R^\text{RF}_\text{b}}.
\end{align}
Therein, $n_\text{sp}$ is the length of the sifting positions information, and $R^\text{RF}_\text{b}$ is the data rate of the public channel. Moreover, $D_i$ is the average delay in case the reconciliation finishes in the $i$-th communication round and is expressed as
\begin{align}
    \label{eqn:avg_delay}
    D_i = \left( 2 i - 1 \right) t_\text{prop} + \displaystyle\sum^{i}_{j=1} t^{\text{SYN}_j}_\text{trans},
\end{align}
where $t^{\text{SYN}_j}_\text{trans} = \frac{n_{\text{syn}, j}}{R^\text{RF}_\text{b}}$ is the transmission delay of the $j$-th syndrome, $n_{\text{syn}, j}$ is the length of the $j$-th syndrome, and $P_{\text{fail}} = 1 - \sum_{i=1}^{N_\text{r}} P^{\left( i \right)}_{\text{succ}}$ is the probability of fail reconciliation. Finally, $P^{\left( i \right)}_{\text{succ}}$ can be computed as
\begin{align}
    \label{eqn:P_succ}
    P^{\left( i \right)}_{\text{succ}} = F_\rho \left( \gamma_i \right) - F_\rho \left( \gamma_{i-1} \right),
\end{align}
where $ \left( \gamma_0 = 0 \right) < \gamma_1 < \cdots < \gamma_\text{th} \leq 0.5$ are the thresholds for code rate determination. These thresholds are defined based on the empirical performance of members in the RC-LDPC code family, as shown in Fig.~\ref{fig:thresholds_10k}.

In addition, $F_\rho \left( \cdot \right)$ is the CDF of QBER of sifted blocks. However, it is worth noting that finding a closed-form expression for this function is not straightforward due to the fact that the obtained sifted bits and erroneous bits in each timeslot depend on the instantaneous channel fading coefficient. As a result, each finite-size sifted block may be formed from different numbers of coherence times with different error patterns. For the sake of system analysis, we will model the statistical distribution of the QBER of sifted blocks as an exponentiated Weibull (EB) random variable. The parameters of that random variable are found by curve-fitting the histogram of the simulation data. Details of the method and the comparisons with other statistical distributions are presented in Section~\ref {sec:dis_qber}. Consequently, the CDF of QBER of sifted blocks in Eq.~\eqref{eqn:P_succ} can be substituted by that of the EB distribution, $F_\text{EW} \left( x \right)$, which is given as in \cite{barrios2013exponentiated}.

\subsection{The Proposed Design with Imperfect Error Estimation}
\label{sec:imperfect_est}

\begin{figure}[ht!]
    \centering
    \includegraphics[width=.8\linewidth]{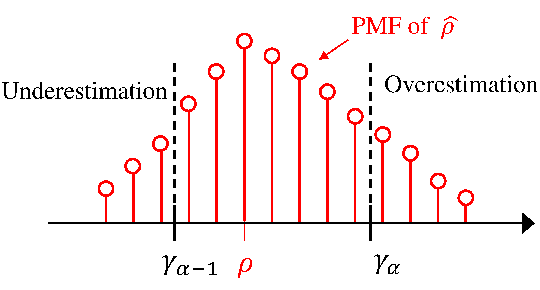}
    \caption{Illustration of underestimation and overestimation in the considered system.}
    \label{fig:under_over_est}
\end{figure}

Next, we analyze the SKT performance of the proposed design, accounting for imperfect error estimation. The imperfection happens due to the finite length of the syndrome, making the estimated values, $\hat{\rho}$, a discrete random variable. This can result in a wrong determination of state, as depicted in Fig.~\ref{fig:under_over_est}, and thus affect system performance. Particularly, let $\alpha$ denote the state of actual value $\rho$, and $\epsilon$ denote the state of the estimate $\hat{\rho}$. If $\alpha < \epsilon$, $\hat{\rho}$ is underestimated, and Bob will need more communication rounds to reach the proper code rate. On the other hand, if $\alpha > \epsilon$, $\hat{\rho}$ is overestimated, and the reconciliation step will leak more information than the necessary amount.

In such cases, the SKT can be computed as
\begin{align}
    \label{eqn:skr_imperfect}
    \text{SKT}_\text{imp} = \frac{R_\text{imp, sec} }{\overline{\varepsilon}_\text{Q} + \left( N_\text{b} - 1 \right) \text{max} \left[ \overline{\varepsilon}_\text{Q}, \overline{\varepsilon}_{\text{imp, P}} \right] +\overline{\varepsilon}_{\text{imp, P}}},
\end{align}
where $R_\text{imp, sec}$ is the average number of secret bits per superframe in the case of imperfect estimation and is expressed as
\begin{align}
    \label{eqn:r_sec_imperfect}
    R_\text{imp, sec} = N_\text{b} n_\text{sift} \displaystyle\sum_{\alpha=1}^{N_\text{r}} \displaystyle\sum_{\epsilon=1}^{N_\text{r}+1} P^{\text{est}}_{\alpha, \epsilon} \left( \beta_\chi I_\text{AB} - I_\text{E} \right),
\end{align}
where $P^{\text{est}}_{\alpha, \epsilon}$ denotes the probability that the actual value $\rho$ in state $\alpha$ and the estimate $\hat{\rho}$ is in state $\epsilon$, whose calculations is shown in \eqref{eqn:prob_err_est_1}, \eqref{eqn:prob_err_est_2}, $\chi = \max \left( \alpha, \chi_\epsilon \right)$, and $\chi_\epsilon = \min \left( \epsilon, N_\text{r} \right)$. As for the average time of the post-processing phase of a sifted block, it is calculated as
\begin{align}
    \label{eqn:avg_pp_imperfect}
    \overline{\varepsilon}_{\text{imp, P}} = t_\text{prop} + t^\text{sifting}_\text{trans} + \sum_{\alpha=1}^{N_\text{r}+1} \sum_{\epsilon=1}^{N_\text{r}+1} P^{\text{est}}_{\alpha, \epsilon} D_{\alpha, \epsilon},
\end{align}
where $D_{\alpha, \epsilon}$ is the average delay when the actual value $\rho$ in state $\alpha$ and the estimate $\hat{\rho}$ is in state $\epsilon$. We can calculate it as
\begin{align}
    \label{eqn:delay_imperfect}
    D_{\alpha, \epsilon} = 
    \begin{dcases}
        \left[ 2 \min \left( \alpha, 2 \right) - 1 \right] t_\text{prop} + \displaystyle\sum^{\chi_\epsilon}_{j=1} t^{\text{SYN}_j}_\text{trans}, & \epsilon \geq \alpha,\\
        \left[ 3 + 2 \left( \chi_\alpha - \epsilon \right) \right] t_\text{prop} + \displaystyle\sum^{\chi_\alpha}_{j=1} t^{\text{SYN}_j}_\text{trans}, & \epsilon < \alpha,\\
    \end{dcases}
\end{align}
where $\chi_\alpha = \min \left( \alpha, N_\text{r} \right)$.

To compute $P^{\text{est}}_{\alpha, \epsilon}$, we consider two possible cases as follows. First, Bob successfully reconciles the sifted block using the initial syndrome $\textbf{s}$; second, he fails the first attempt and estimates the QBER using the initial syndrome $\textbf{s}$. As for the first case, the computation of $P^{\text{est}}_{1, \epsilon}$ can be given as
\begin{align}
    \label{eqn:prob_err_est_1}
    P^{\text{est}}_{1, \epsilon} =
    \begin{dcases}
        F_\text{EW} \left( \gamma_1 \right), & \epsilon = 1,\\
        0, & \epsilon \neq 1.
    \end{dcases}
\end{align}
Regarding the second case $\left( \alpha \neq 1 \right)$, the expression of $P^{\text{est}}_{\alpha, \epsilon}$ is
\begin{align}
    \label{eqn:prob_err_est_2}
    P^{\text{est}}_{\alpha, \epsilon} =
    \begin{dcases}
            0, & \epsilon = 1,\\
            P^{\gamma_{2}}_\alpha, & \epsilon = 2,\\
            P^{0.5}_\alpha - P^{\gamma_{N_\text{r}}}_\alpha + P^{\text{fail}}_\alpha, & \epsilon = N_\text{r},\\
            P^{\gamma_{\epsilon}}_\alpha - P^{\gamma_{\epsilon-1}}_\alpha, & \text{otherwise},
        \end{dcases}
\end{align}
where $P^{\gamma}_\alpha$ is the probability that the estimate $\hat{\rho}$ is smaller than an arbitrary value $\gamma$ given the actual values $\rho$ in the state $\alpha$, or $P^{\gamma}_\alpha = P \left( \hat{\rho} \leq \gamma \;\middle|\; \rho \text{ in state $\alpha$} \right)$, and $P^{\text{fail}}_\alpha$ is the probability that Bob can not estimate because $\frac{w}{m} > 1/2$ given the actual values $\rho$ in the state $\alpha$ and given in \eqref{eqn:prob_fail_est}. For $P^{\gamma}_\alpha$, it is given as
\begin{align}
    \label{eqn:prob_underest}
    P^{\gamma}_\alpha
    = \int\limits_{\gamma_a}^{\gamma_{a+1}} f_\text{EW} \left( \rho \right) P \left( \hat{\rho} \leq \gamma \;\middle|\; \rho \right) \mathrm{d} \rho,
\end{align}
where $f_\text{EW} \left( \cdot \right)$ is the PDF of the EB distribution \cite{barrios2013exponentiated}. The probability in the last identity is derived from Lemma~\ref{LEM:CDF_P}.  
\begin{lemma}
    \label{LEM:CDF_P}
    The closed-form expression for the probability that the estimate $\hat{\rho}$ is smaller than an arbitrary value $\gamma \in \left[ 0, 0.5 \right]$ given an actual value $\rho$ is
    \begin{align}
    \label{eqn:cdf_p_final}
    P \left( \hat{\rho} \leq \gamma \;\middle|\; \rho \right) = I_{1 - \phi \left( \rho \right)} \left( m - \phi^{-1} \left( \gamma \right), \phi^{-1} \left( \gamma \right) + 1 \right),
\end{align}
where $I_\cdot \left( \cdot, \cdot \right)$ is the regularized incomplete beta function.
\end{lemma}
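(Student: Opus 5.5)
The plan is to reduce the event $\{\hat{\rho} \le \gamma\}$ to an event about the syndrome-difference weight $w$, and then use the standard identity between the binomial CDF and the regularized incomplete beta function. Throughout, write $m = m_{d_\text{min}}$ and $d = d_\text{min}$.

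\textbf{Step 1 (distribution of $w$).} Following the Bernoulli model adopted in Sec.~\ref{sec:err_est_synrome}, each of the $m$ bits of $\Delta\textbf{s} = \textbf{s}_{\text{A},d}\oplus\textbf{s}_{\text{B},d}$ is the XOR of $d$ independent bit flips. It therefore equals one with probability $\phi(\rho)$ given by \eqref{eqn:syn_bit_one}. Treating the $m$ syndrome bits as independent, as \cite{lechner2013estimating} does, we get $w \sim \mathrm{Bin}\left(m, \phi(\rho)\right)$.

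\textbf{Step 2 (event translation).} On $[0,1/2)$ the map $\hat{q} \mapsto \bigl(1-(1-2\hat q)^{1/d}\bigr)/2$ is strictly increasing. Also, for $\gamma \in [0, 0.5]$ the case $\hat q \ge 1/2$ gives $\hat\rho = 1/2$. Hence $\hat{\rho} \le \gamma$ if and only if $w/m \le \phi(\gamma)$, i.e.\ $w \le m\,\phi(\gamma)$.

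Here the paper's notation $\phi^{-1}(\gamma)$ inside \eqref{eqn:cdf_p_final} must be read as the integer weight threshold $k_\gamma = \lfloor m\,\phi(\gamma)\rfloor$. That is, $k_\gamma$ is the largest $w$ whose estimate $\phi^{-1}(w)$ does not exceed $\gamma$. This matches the paper's use of $\phi^{-1}(w)$ as a function of $w$. I would state this convention explicitly.

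The boundary $\gamma = 1/2$ needs separate treatment, since there $\hat q \ge 1/2$ also yields $\hat\rho \le \gamma$. In that case all outcomes are counted, which agrees with $k_\gamma = m$ and $I_{\cdot}(0,\cdot)=1$ by convention.

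\textbf{Step 3 (beta identity).} We now have
\[
P\left(\hat\rho\le\gamma \;\middle|\; \rho\right)=\sum_{j=0}^{k_\gamma}\binom{m}{j}\phi(\rho)^j\left(1-\phi(\rho)\right)^{m-j}.
\]
Apply the classical identity $\sum_{j=0}^{k}\binom{m}{j}p^j(1-p)^{m-j} = I_{1-p}(m-k,\,k+1)$. It can be proved by differentiating both sides with respect to $p$: the sum telescopes to $-m\binom{m-1}{k}p^k(1-p)^{m-k-1}$, which matches the derivative of the right-hand side. One then checks agreement at $p=0$. Substituting $p=\phi(\rho)$ and $k=k_\gamma$ gives \eqref{eqn:cdf_p_final}.

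The main obstacle is Step 1 rather than the algebra. Rows of $\textbf{H}$ sharing variable nodes make the syndrome bits correlated, so the binomial law is exact only under the Bernoulli independence assumption. I would invoke that assumption explicitly, citing \cite{lechner2013estimating, kiktenko2018error}, and note that it is accurate when the lifted rows of weight $d_\text{min}$ have (nearly) disjoint supports, which the lifting structure of Remark~\ref{remark:protograph_weight} encourages. The second delicate point is the integer/rounding convention for $\phi^{-1}(\gamma)$ in Step 2.
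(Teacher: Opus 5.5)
Your proposal is correct and follows the same route as the paper's proof. Monotonicity of $\phi$ turns $\{\hat\rho\le\gamma\}$ into a threshold event on the weight $w$; $w$ is binomial with parameter $\phi(\rho)$ under the Bernoulli model; and the binomial CDF is rewritten as $I_{1-\phi(\rho)}(m-k,k+1)$. Your reading of $\phi^{-1}(\gamma)$ as the integer threshold $\lfloor m\,\phi(\gamma)\rfloor$, your separate handling of $\gamma=1/2$, and your explicit independence assumption are refinements the paper's proof leaves implicit, since its $\phi^{-1}$ notation there is loose.
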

\begin{proof}
Please see Appendix~\ref{sec:proof_cdf_p}.
\end{proof}
Due to the analytical intractability of integration, we derive the numerical expression of \eqref{eqn:prob_underest} by first changing the interval to $\left[ -1, 1 \right]$ and then applying the Gauss-Legendre quadrature method \cite{abramowitz1988handbook}. As a result, the equation \eqref{eqn:prob_underest} can be approximated as 
\begin{align}
    \label{eqn:closed_form_prob_underest}
    P^{\gamma}_\alpha \!\! \approx \! \psi_\alpha \! \displaystyle\sum_{l=1}^\ell \! w_l f_\text{EW} \! \left( \! \Xi_l \! \right) \! I_{1 - \phi \left( \Xi_l \right) }  \!\left[ \! m \! - \! \phi^{-1} \! \left( \! \gamma \! \right), \phi^{-1} \! \left( \! \gamma \! \right) \! + \! 1 \! \right],
\end{align}
where $\psi_\alpha = \frac{\gamma_{a+1} - \gamma_{a}}{2}$, $\Xi_l = \frac{\gamma_{a+1} - \gamma_{a}}{2} x_l + \frac{\gamma_{a} + \gamma_{a + 1}}{2}$, $x_l$ is the $l$-th root of Legendre polynomials $P_n \left( x \right)$, $w_l = \frac{2}{\left( 1 - x_l \right)^2 \left[ P'_n \left( x_l \right) \right]}$, and $\ell$ is the Gauss-Laguerre approximation order.

Capitalizing on Lemma~\ref{LEM:CDF_P}, $P^{\text{fail}}_\alpha$ is expressed by
\begin{align}
    \label{eqn:prob_fail_est}
    P^{\text{fail}}_\alpha & = \int\limits_{\gamma_a}^{\gamma_{a+1}} f_\text{EW} \left( \rho \right) \text{Pr} \left[ W > \frac{m}{2} \right] \mathrm{d} p \nonumber \\
    & = \int\limits_{\gamma_a}^{\gamma_{a+1}} f_\text{EW} \left( \rho \right) \left[ 1 - I_{1 - \phi \left( p \right)} \left( \frac{m}{2}, \frac{m}{2} + 1 \right) \right] \mathrm{d} p.
\end{align}
By applying the same mathematical manipulations used with Eq.~\eqref{eqn:prob_underest}, we obtain the numerical expression of $P^{\text{fail}}_\alpha$ as 
\begin{align}
    \label{eqn:closed_form_prob_fail_est}
    P^{\text{fail}}_\alpha \! \approx \psi_\alpha \displaystyle\sum_{l=1}^\ell w_i f_\text{EW} \left( \Xi_l \right) \left[ 1 - I_{1 - \phi \left( \Xi_l \right)} \left( \! \frac{m}{2}, \! \frac{m}{2} \! + \! 1 \! \right) \! \right].
\end{align}

\begin{remark}
Approximations \eqref{eqn:closed_form_prob_underest} and \eqref{eqn:closed_form_prob_fail_est} quickly converge to exact forms, i.e., \eqref{eqn:prob_underest} and \eqref{eqn:prob_fail_est}, respectively, after $\ell=10$.
\end{remark}

\begin{figure*}[t]
    \centering
    \includegraphics[width=.9\textwidth]{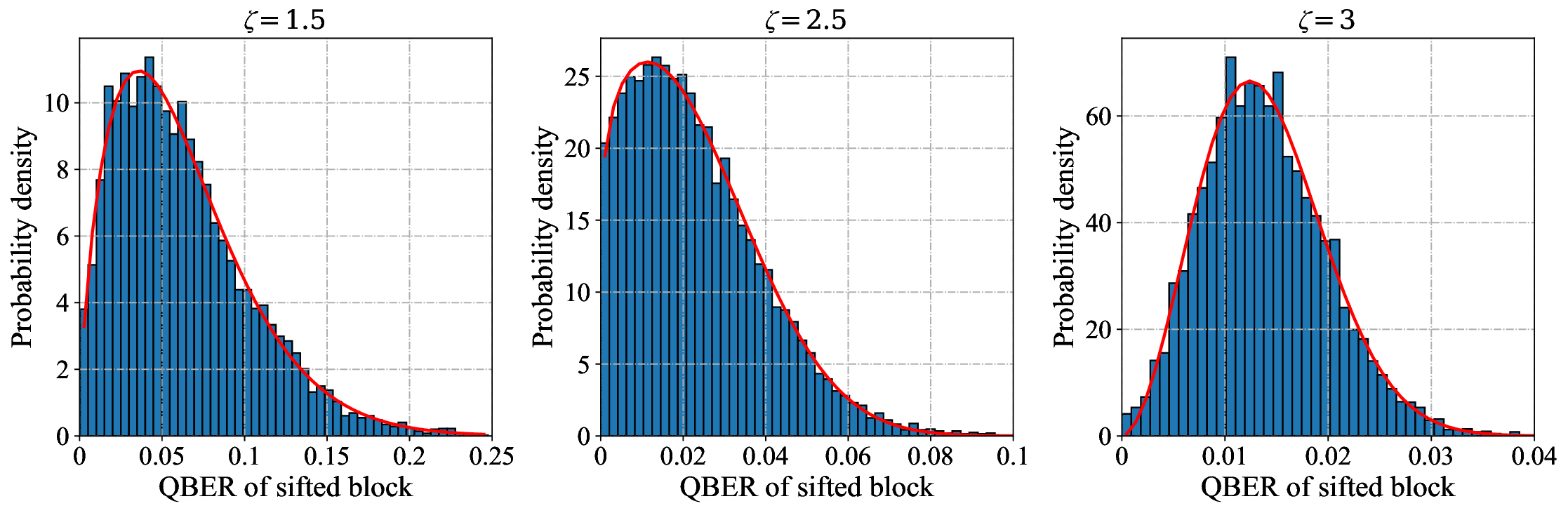}
    \caption{Examples of simulation data histograms for sifted blocks' QBER and the fitting curves of the EB distribution with different DT scale coefficients, $\zeta$.}
    \label{fig:test_results}
\end{figure*}

\begin{table*}[t]
\centering
\caption{GoF results and fitting parameters of considered probability distributions for different parameter settings}
\resizebox{\textwidth}{!}{%
\begin{tabular}{| M{3.5cm} | M{2.8cm} | M{2.8cm} | M{3cm} | M{2.8cm} | M{2.8cm} }
\hline
\multirow{2}{*}{\textbf{Parameters}} & \textbf{Normal}  & \textbf{Log-normal} & \textbf{Weibull} & \textbf{Exponentiated Weibull} \\ 
\cline{2-5}
 & GoF, ($\mu_\text{N}$, $\sigma_\text{N}$)  & GoF, ($\mu_\text{LN}$, $\sigma_\text{LN}$)  & GoF, ($\beta_\text{W}$, $\eta_\text{W}$)  & GoF, ($\alpha_\text{EW}$, $\beta_\text{EW}$, $\eta_\text{EW}$) \\
\hhline{|=|=|=|=|=|}
$\zeta = 1.5$ & $0.9484$, ($0.047, 0.03883$) & $0.937$, ($-2.8467, 0.8079$) & $0.99$, ($1.571$, $0.069$)  & $0.9903$, ($1.1147, 1.4672, 0.0657$) \\  \hline
$\zeta = 2$ & $0.9512$, ($0.0222, 0.0247$) & $0.9672$, ($-3.4325, 1.033$) & $0.9972$, ($1.3143, 0.0409$)  & $0.9983$, ($1.3246, 1.1006, 0.0342$) \\  \hline
$\zeta = 2.25$ & $0.9499$, ($0.0179,  0.02$) & $0.9192$, ($-3.62, 1.1027$) & $0.9955$, ($1.2744$, $0.0331$) & $0.9974$, ($0.744, 1.5612, 0.0386$) \\ \hline
$\zeta = 2.5$ & $0.9564$, ($0.0166, 0.0161$) & $0.8603$, ($-3.793, 0.9559$) & $0.9849$, ($1.371$, $0.027$) & $0.9969$, ($0.5412, 2.1262, 0.0348$) \\ \hline
$\zeta = 2.75$ & $0.9867$, ($1.8442, 0.0206$) & $0.8996$, ($-4.0149, 0.6545$) & $0.9799$ ($2.5377, 0.0155$) & $0.9859$, ($0.6247, 2.5, 0.024$) \\ \hline
$\zeta = 3$ & $0.9831$, ($0.0131, 0.006$) & $0.9585$, ($-4.2743, 0.4644$) & $0.989$ ($2.5377, 0.0155$) & $0.9902$, ($1.3141, 2.171, 0.0142$) \\ \hline
\end{tabular}
}
\label{tab:test_results}
\end{table*}

\subsection{Statistical Distributions of QBER of Sifted Blocks}
\label{sec:dis_qber}

This section aims to derive the statistical distribution of sifted blocks' QBER by the curve-fitting method. In particular, we find a probability distribution whose PDF best fits the histogram of the simulation data. To assess the fitness of a distribution with simulation data, we adopt the well-known $R^2$ ($R$-square) measure \cite{le2021cloud}. It is clear that the closer the $R^2$ measure is to 1, the better the fit of the predicted probability distribution to the data. 

Table~\ref{tab:test_results} presents the GoF results and fitting parameters of the considered probability distributions, i.e., normal, log-normal, Weibull, and EB distributions, for scenarios with different values of the DT scale coefficient $\zeta$. Therein, $\mu_\text{N}$ and $\sigma_\text{N}$ denote the mean and variance of the normal distribution, respectively; $\mu_\text{LN}$ and $\sigma_\text{LN}$ are the location parameter and scale parameter of the log-normal distribution; $\beta_\text{W}$ is the shape parameter, and $\eta_\text{W}$ is the scale parameter of the Weibull distribution; $\alpha_\text{EW}$ and $\beta_\text{EW}$ represent the shape parameters, and $\eta_\text{EW}$ is the scale parameter of the EB distribution. Other parameters used in the simulation can be found in Section~\ref{sec:results_discuss}.

For each scenario, we generate $10^4$ samples\footnote{It should be noted that such an amount of samples can be collected over a relatively short time frame (a few milliseconds), in which the zenith angle can be reasonably assumed to remain constant.} and create its histogram with $50$ bins. The histogram is then fitted with the PDF of the considered probability distributions using the \texttt{curve\_fit} function from the open-source SciPy library. As seen, the EB distribution provides the best fit among the considered statistical distributions, with $R^2 > 0.99$ in most scenarios. This is because the EB distribution has three parameters, providing a higher degree of freedom to fit the random QBER values across sifted blocks.
Consequently, for the rest of the paper, the EB distribution is considered for modeling the statistical distribution of sifted blocks' QBER.

\section{Numerical Results \& Discussions}
\label{sec:results_discuss}

\subsection{Parameter Settings}
\label{sec:para_setting}

Unless otherwise stated, the parameters considered for the analysis are as follows. {\textit{At the LEO satellite (Alice):} satellite altitude $H_\text{s} = 500$ km, zenith angle $\xi = 30^{\circ}$, divergence half-angle $\theta = 25$ $\mu\text{rad}$, $\mu_x = \mu_y = 0$, jitter angles $\theta_{\text{jt}, x} = \theta_{\text{jt}, y} = 2.5$ $\mu\text{rad}$, data rate of the FSO channel $R^\text{FSO}_\text{b} = 1$ Gbps, data rate of the public channel $R^\text{Pub}_\text{b} = 1$ Gbps, modulation depth $\delta = 0.2$, $F_0 = \infty$, transmitted power $P_\text{t} = 25$ dBm, optical wavelength $\lambda = 1.55$ $\mu$m, number of blocks of sifted key per superframe $N_\text{b} = 100$.} \textit{At the ground vehicle (Bob and Eve):} the vehicle altitude $H_\text{v} = 1.5$ m, receiver aperture radius $r_\text{a} = 5$ cm, detector responsivity $\mathfrak{R} = 0.9$ A/W, load resistor $R_\text{L} = 1$ $k\Omega$, amplified noise figure $F_\text{n} = 2$, receiver temperature $T = 298$ K, optical bandwidth $B_0 = 250$ GHz, the DT scale coefficient at Bob $\zeta = 2.5$, and the minimum distance between Bob and Eve $d_\text{E} = 10$ m. \textit{Regarding the FSO channel:} atmospheric altitude $H_\text{a} = 20$ km, rms wind speed $v_\text{wind} = 21$ m/s, ground turbulence level $C^2_\text{n} \left( 0 \right) = 10^{-14}$ $\text{m}^{-2/3}$, channel coherence time $t_\text{coh} = 1$ ms, the Sun's spectral irradiance $\Omega = 0.1$ W/$\text{cm}^2 \cdot \mu\text{m}$, vertical extent of clouds $H_\text{c} = 2$ km, $M_\text{c} = 1$ $\text{mg/m}^3$, and $N_\text{c} = 250$ $\text{cm}^{-3}$.

\begin{figure*}[ht!]
    \centering
    \includegraphics[width=\textwidth]{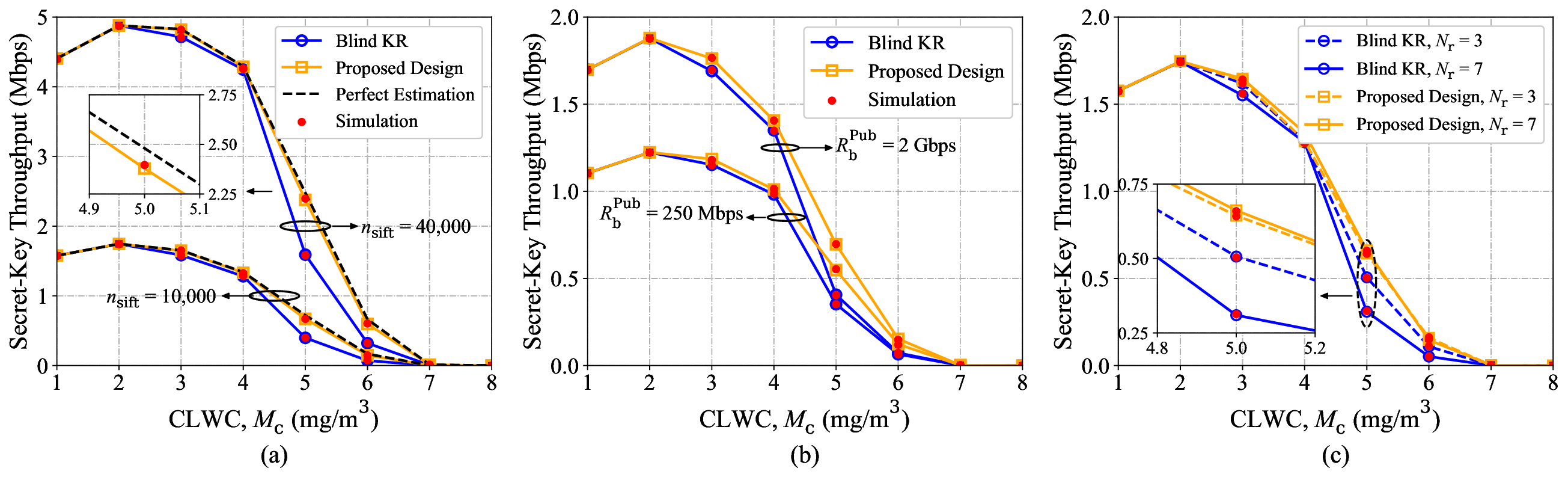}
    \caption{SKT performance of the proposed design and the blind reconciliation versus CLWC with different (a) block lengths, $n_\text{sift}$, (b) data rates of the public channel $R^\text{Pub}_\text{b}$, and (c) maximum number of code rates in the RC-LDPC code family, $N_\text{r}$.}
    \label{fig:perf_comp}
\end{figure*}

We consider two protograph RC-LDPC code families, whose block lengths are $10,000$ and $40,000$, respectively. These families are constructed from the same protograph derived in Appendix~\ref{app:code_design_example}. We also consider the two-step lifting with the progressive edge growth (PEG) algorithm and the circulant PEG algorithm \cite{cuong2024toward}. The first lifting factor is 4; the second lifting factor is 125 and 500, respectively. Note that for the sake of analysis, we consider five code rates of the family, i.e., $\{0.9, 0.8, 0.7, 0.6, 0.5\}$. The thresholds of these code rates are determined based on the empirical performance of these LDPC codes with specific lengths. Notably, the thresholds for the block length of $10,000$ and $40,000$ are $\{0.005, 0.0135, 0.027, 0.054, 0.0835\}$ and $\{0.0065, 0.0175, 0.0335, 0.0635, 0.0896\}$, respectively.

\subsection{Performance Comparison}
\label{sec:perf_comp}

First, we highlight the effectiveness of our proposed design with the conventional blind reconciliation in Fig.~\ref{fig:perf_comp}. Here, the conventional blind reconciliation considers the same RC-LDPC code family as our proposal. Particularly, Fig.~\ref{fig:perf_comp}(a) presents the performance with different sifted block lengths, i.e., $10,000$ and $40,000$. For ease of comparison, we also plot the SKT in the case of perfect error estimation. As expected, our proposed design outperforms blind reconciliation across the considered range of CLWC values. This is because the additional syndrome-based error estimation can help reduce the number of required communication rounds, resulting in improved throughput. Another observation from Fig.~\ref{fig:perf_comp}(a) is that the SKT performance can be improved by increasing the sifted block length. This is due to the required time to sift and reconcile a sifted block being much longer than the time to gather a sifted block over the quantum phase. Therefore, long block lengths can effectively utilize the quantum phase time. Additionally, the Monte Carlo simulations demonstrate a good match with the analytical data, further confirming the accuracy of our theoretical framework. 

Fig.~\ref{fig:perf_comp}(b) illustrates the performance comparison with different values of the public channel data rate, which are $250$ Mbps and $2$ Gbps. As expected, we can enhance the SKT performance by using a high value of $R^\text{Pub}_\text{b}$. The reason is that high values of $R^\text{Pub}_\text{b}$ significantly reduce the transmission delay of the sifted data and syndrome bits, leading to a lower post-processing delay. Furthermore, as the propagation delay is the dominant factor of the post-processing delay, considering the syndrome-based error estimation method can substantially improve the system performance.

Fig.~\ref{fig:perf_comp}(c) investigates the performance of the proposed design and blind reconciliation with different numbers of code rates in the RC-LDPC code family. Specifically, we consider two LDPC code families with three and seven code rates. The three-code family includes $\{0.9, 0.5, 0.7\}$, and the seven-code family additionally considers $\{0.75, 0.65\}$, whose thresholds are $\{0.0205, 0.037\}$. It is observed that when the number of code rates is seven, the blind reconciliation experiences performance degradation due to the increasing number of communication rounds. On the other hand, the proposed design performance with seven code rates is better than that with three code rates. This happens because more code rates can maintain high key reconciliation efficiency over a wide range of QBER, thus improving the SKT performance. This highlights another advantage of our proposed design compared to blind reconciliation.

\subsection{Impact and Selection of Proposed Design Parameters}
\label{sec:para_selection}

\begin{figure}[t]
    \centering
    \includegraphics[width=.8\linewidth]{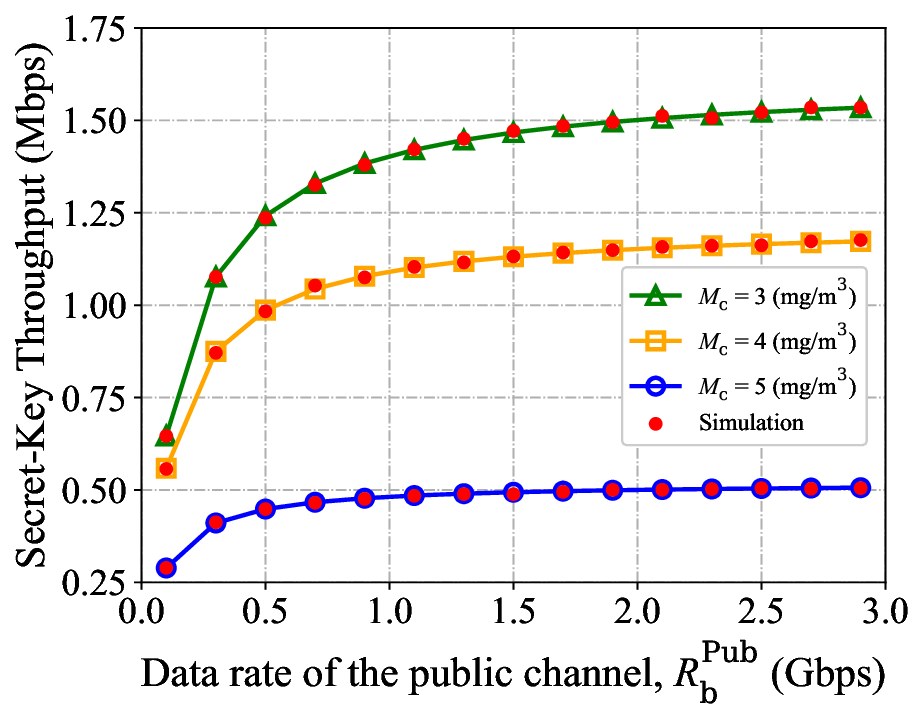}
    \caption{Throughput of the proposed design versus the data rate of the public channel with different CLWC values, $M_\text{c}$.}
    \label{fig:rf_gp_blocklen}
\end{figure}

Our next focus is on the impact and proper selection of the public channel's data rate. Fig.~\ref{fig:rf_gp_blocklen} investigates the SKT of the proposed system with different public channels' data rates. As discussed in Fig.~\ref{fig:perf_comp}(b), increasing the data rate can significantly advance the SKT performance. However, at a certain value of $R^\text{Pub}_\text{b}$, the transmission delay becomes insignificant, and the performance saturates. Based on this finding, we can select the minimum value of $R^\text{Pub}_\text{b}$ to maintain a specific level of throughput. For instance, when the CLWC value is $3$ $\text{mg/m}^3$, we can select the data rate of the public channel as $0.5$ Gbps to retain the throughput levels of $1.25$ Mbps.

\begin{figure}[t]
    \centering
    \includegraphics[width=.8\linewidth]{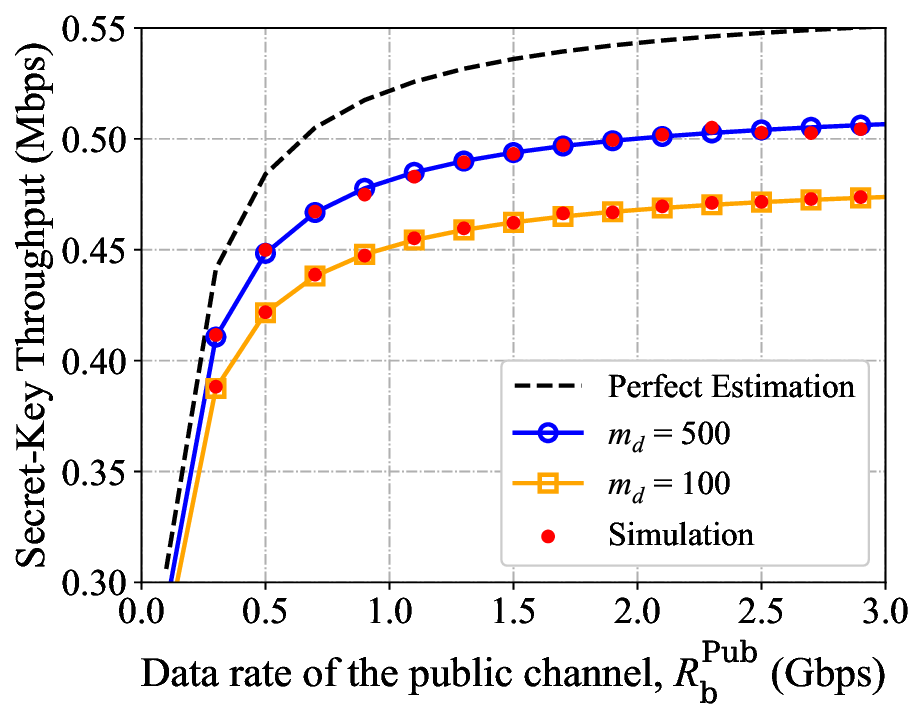}
    \caption{Throughput versus the data rate of the public channel with different numbers of syndrome bits considered for error estimation, $m_d$.}
    \label{fig:diff_m_d}
\end{figure}

We analyze the impact of imperfect error estimation in Fig.~\ref{fig:diff_m_d}, which illustrates the throughput performance with different numbers of syndrome bits used for error estimation, $m_d$. We also set $M_\text{c} = 5$ $\text{mg/m}^3$. As shown in the figure, when the data rate of the public channel is high, i.e., $R^\text{Pub}_\text{b} > 500$ Mbps, the gap between the perfect estimation case and the actual performance becomes more pronounced. The reason is that propagation delays dominate the transmission delays in this region, making the underestimation a major issue. Therefore, imperfect error estimation should be carefully considered when designing practical QKD systems. Moreover, involving a maximum number of possible syndrome bits in error estimation is recommended, as it will lead to less performance degradation thanks to more accurate estimations.

\subsection{Performance Investigation Over A Satellite Pass}
\label{sec:perf_sat_pass}

\begin{figure*}[ht]
    \centering
    \includegraphics[width=\linewidth]{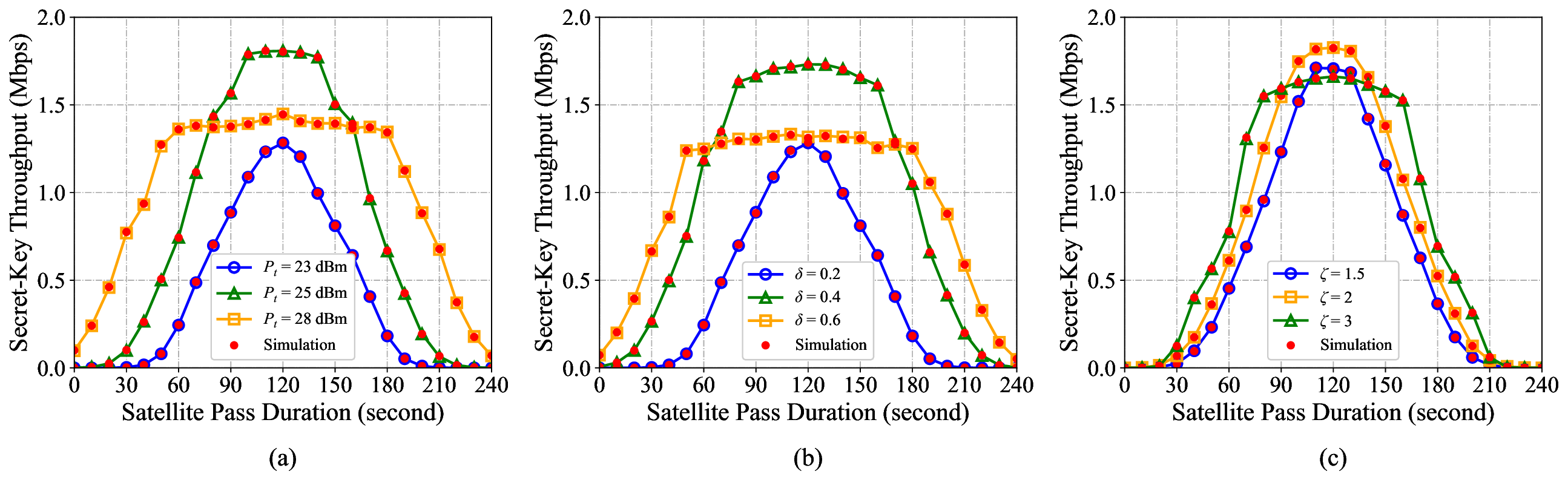}
    \caption{Throughput performance over the satellite pass with different (a) LEO transmitted power, $P_\text{t}$, (b) modulation depths, $\delta$, and (c) DT scale coefficients, $\zeta$.}
    \label{fig:sat_pass_duration}
\end{figure*}

This section investigates the performance of the proposed system in a case study involving Starlink's LEO satellite and a moving vehicle. In particular, we select Alice as the STARLINK-1293 flying over Japan on December 23, 2021. Note that the slant distance and the zenith angle of the system change over time. To investigate the system performance over the satellite pass duration, we re-calculate the slant distance $L$ and the zenith angle $\xi$ for each time instant based on the vehicle’s position and the two-line element set (TLE) of the satellite. The TLE of STARLINK-1293 can be found in \cite{celestrak}, and the method to compute $L$ and $\xi$ can be found in the Appendix of \cite{cuong2024toward}. The information on the vehicle positions needed for the calculation, i.e., longitudes, latitudes, and elevations, can also be found on \cite[Table IV]{cuong2024toward}. The default CLWC value for all the figures is $3$ $\text{mg/m}^3$. Other parameters are the same as the default parameters.

Fig.~\ref{fig:sat_pass_duration}(a) shows the throughput performance of the proposal over the satellite pass with different transmitted power, i.e., $23$, $25$, and $28$ dBm. Note that $t=0$ corresponds to 16:09:20 (UTC+9), December 23, 2021. It can be seen that when the transmitted power is increased from $23$ to $25$ dBm, the system's throughput is enhanced during the satellite pass. It can be explained that high transmitted power can reduce the average QBER at Bob. Therefore, sifted blocks can be reconciled with high-rate LDPC codes. However, if the transmitted power is too high, e.g., $P_\text{t} = 28$ dBm in this case, the signal quality at Eve can also be improved, resulting in more leaked information and a lower throughput. From this result, we can determine the transmitted power that maximizes throughput during the satellite pass. For instance, in this case, we can select the transmitted power as $P_\text{t} = 25$ dBm when $t \in \left[ 80, 160 \right]$, $P_\text{t} = 28$ dBm otherwise.

Fig.~\ref{fig:sat_pass_duration}(b) presents the throughput performance with different modulation depths, which are $0.2$, $0.4$, and $0.6$. The satellite's transmitted power is set to $23$ dBm. As expected, the system performance with $\delta = 0.4$ is higher than that with $\delta = 0.2$. This is because the higher the modulation depth value, the lower the QBER at Bob. However, the system performance with $\delta = 0.6$ is lower compared to the one with $\delta = 0.4$ during $\left[ 70, 170 \right]$. This occurs because the average QBER at Eve also becomes better, thus increasing the amount of information that Eve can obtain via the quantum channel, $I_\text{E}$. Using this figure, Alice can adjust the modulation depth throughout the satellite pass to optimize the system performance. Particularly, when $t \in \left[ 70, 170 \right]$, the modulation depth should be set as $0.4$; otherwise, Alice should adjust it as $\delta = 0.6$.

Finally, we investigate the performance of the proposed system with different values of the DT scale coefficient, i.e., $1.5$, $2$, and $3$, as depicted in Fig.~\ref{fig:sat_pass_duration}(c). It is observed that the throughput performance when $\zeta=2$ is higher than that when $\zeta = 1.5$. This occurs because Bob's average QBER improves, thereby advancing SKT performance. However, setting a high value of $\zeta$ may lead to a longer duration of the quantum phase. This is because Bob takes a longer time to gather a sifted block, causing a performance bottleneck. For example, when $\zeta = 3$, the system performance is degraded during the period $\left[ 100, 140 \right]$ compared to that when $\zeta = 2$. Utilizing this result, Bob can select the proper value of the DT scale coefficient to maximize the system performance. In this case, $\zeta = 2$ when $t \in \left[ 100, 140 \right]$; $\zeta = 3$ otherwise.   

\section{Conclusion}
\label{sec:conclusion}
This paper presented a blind reconciliation protocol for satellite-based FSO/QKD systems using the protograph RC-LDPC code family and syndrome-based error estimation. We analytically studied the performance of the proposed design in terms of SKT, considering major adverse issues over the FSO link and imperfect error estimation. Monte Carlo simulations were conducted to verify the correctness of the analytical framework. From the numerical results, we derive several findings and design guidelines, as follows.
\begin{enumerate}
    \item The proposed design outperforms the conventional blind reconciliation in terms of SKT, especially when the public channel's data rate and the number of code members are high. The trade-off is increased complexity due to the additional error estimation step.
    \item We studied the impact of the public channel's data rate and imperfect error estimation on the system performance. Moreover, we discuss the selection of the public channel's data rate at different CLWC values.
    \item The performance of the proposed system during a satellite pass could be optimized by properly selecting transmitter power, modulation depths, or DT scale coefficients at different time durations.
\end{enumerate}
As future work, the practical feasibility of the proposed approach will be investigated, where the protocol is being validated using experimental data from a LEO-to-ground optical link, with initial efforts reported in \cite{cuong2025b_nict}.

{\appendices
\section{RC-LDPC Code Family Design Example}
\label{app:code_design_example}

This section presents the design of an RC-LDPC code family with rates $\{0.9, 0.85, \cdots, 0.55, 0.5\}$ using the code extension method. In this regard, we start from a protomatrix of the mother code and gradually extend it to obtain the protomatrix of each code member. At each step of extension, the extension part is determined by exhaustively searching for the one with the lowest decoding threshold, subject to constraints imposed by error-floor considerations \cite{van2012design}. The detailed steps of the design procedure can be found in \cite{cuong2024blind}.

Regarding our example, we first construct the protomatrix of the highest code rate, i.e., $0.9$. The size of this matrix is selected as two rows and twenty columns. Considering the size of the base matrix and the linear minimum distance growth property, the R4JA code family \cite{divsalar2009capacity} is a good choice to start with. Then, we apply the lengthening method to the rate-$1/2$ R4JA code to construct the desired protomatrix \cite{van2012design}. After a few steps of lengthening, the base matrix is obtained as 
\begin{align}
    \label{eqn:9_10}
    & \textbf{B}_{9/10} = \nonumber \\
    &
    \resizebox{.95\hsize}{!}{%
    $\begin{bmatrix}
        1 & 1 & 1 & 3 & 1 & 2 & 1 & 3 & 1 & 1 & 3 & 1 & 1 & 1 & 3 & 1 & 1 & 1 & 3 & 1 \\
        1 & 2 & 2 & 1 & 2 & 1 & 3 & 1 & 2 & 2 & 1 & 2 & 2 & 2 & 1 & 2 & 2 & 2 & 1 & 3
    \end{bmatrix}.$
    }
\end{align}
To construct the next code rate, we extend the protomatrix by one row and exhaustively search for the best code. For each row added to the base matrix, the code rate is $\frac{20 - (n_\text{add}+2)}{20}$ where $n_\text{add}$ is the number of added rows. We simplify the search process by constraining the possible values for the first column to $\{0\}$, and for other columns to $\{0, 1\}$. After several steps of extending and searching, we derive the protomatrix of the code rate $1/2$ as
\begin{align}
    \label{eqn:1_2}
    & \textbf{B}_{1/2} = \nonumber \\
    &
    \resizebox{.95\hsize}{!}{%
    $\begin{bmatrix}
        1 & 1 & 1 & 3 & 1 & 2 & 1 & 3 & 1 & 1 & 3 & 1 & 1 & 1 & 3 & 1 & 1 & 1 & 3 & 1 \\
        1 & 2 & 2 & 1 & 2 & 1 & 3 & 1 & 2 & 2 & 1 & 2 & 2 & 2 & 1 & 2 & 2 & 2 & 1 & 3 \\
        0 & 0 & 0 & 1 & 0 & 1 & 1 & 1 & 0 & 0 & 1 & 0 & 0 & 0 & 1 & 0 & 1 & 1 & 1 & 1 \\
        0 & 0 & 0 & 1 & 0 & 1 & 1 & 1 & 0 & 0 & 1 & 0 & 0 & 0 & 1 & 1 & 1 & 1 & 1 & 1 \\
        0 & 0 & 0 & 0 & 0 & 1 & 1 & 1 & 0 & 0 & 1 & 0 & 1 & 1 & 1 & 0 & 1 & 1 & 1 & 1 \\
        0 & 0 & 0 & 0 & 0 & 0 & 0 & 0 & 0 & 0 & 1 & 1 & 0 & 1 & 1 & 0 & 1 & 1 & 1 & 1 \\
        0 & 0 & 0 & 0 & 0 & 0 & 0 & 0 & 0 & 1 & 1 & 0 & 0 & 0 & 1 & 0 & 1 & 1 & 1 & 1 \\
        0 & 0 & 0 & 0 & 0 & 0 & 0 & 0 & 1 & 0 & 1 & 0 & 0 & 0 & 1 & 0 & 1 & 1 & 1 & 1 \\
        0 & 0 & 0 & 0 & 0 & 0 & 0 & 0 & 0 & 0 & 1 & 0 & 1 & 0 & 1 & 0 & 1 & 1 & 1 & 1 \\
        0 & 0 & 0 & 0 & 1 & 0 & 0 & 0 & 0 & 0 & 0 & 0 & 0 & 0 & 0 & 0 & 1 & 1 & 1 & 1
    \end{bmatrix}.$
    }
\end{align}

\section{Proof of Lemma \ref{LEM:P_SIFT_QBER}}
\label{sec:proof_p_sift_qber}
\textit{Regarding the average sift probability,} it is defined as the probability that Bob can detect a bit, either '0' or '1', using the DT detection rule in Sec.~\ref{sec:cv_qkd_dtdd}. The sift probability can be calculated as \cite{trinh2018design}
\begin{align}
    \label{eqn:p_sift}
    P_\text{sift} = \! P_{\text{A}, \text{B}} \! \left( \! 0, 0 \! \right) \! + \! P_{\text{A}, \text{B}} \left( \! 0, 1 \!\right) + \! P_{\text{A}, \text{B}} \left( \! 1, 0 \!\right) + \! P_{\text{A}, \text{B}} \left( \! 1, 1 \!\right),
\end{align}
where $P_{\text{A}, \text{B}} \left( a, b\right)$ denotes the joint probability that Alice transmits bit '$a$' and Bob detects bit '$b$' $\left( a, b \in \left\{ 0, 1 \right\} \right)$. These joint probabilities can be given as \cite{trinh2018design}
\begin{align}
    \label{eqn:P_A_B}
    P_{\text{A}, \text{B}} \left( a, b \right) = P_\text{A} \left( a \right) \int\limits_{0}^{\infty} Q \left[ \frac{\Upsilon \left(a, b \right)}{\sigma_\text{n}} \right] f_{h_\text{B}} \left( h_\text{B} \right) \mathrm{d} h_\text{B},
\end{align}
where $P_\text{A} \left( a \right) = 0.5$ represents the probability that Alice sends bit '$a$', $\Upsilon \left(a, b \right) = \left( 1 - 2 b \right) i_a + \left( 2 b - 1 \right) d_b$, $i_a = \frac{\left( 2 a - 1 \right)}{4} \mathfrak{R} \delta P_\text{t} h_\text{B}$, $d_0$ and $d_1$ are the DTs defined in \ref{sec:cv_qkd_dtdd}. Substituting \eqref{eqn:composite_pdf_Bob} into \eqref{eqn:P_A_B}, changing the variable to $h'_\text{B} = \frac{h_\text{B}}{h_\text{c} A_\text{mod}}$, then using the Gaussian-Laguerre polynomial approximation, we can obtain the numerical expression of $P_\text{sift}$ in \eqref{eqn:closed_form_P_sift}.

\textit{As for Eve's error probability,} it can be defined as the probability that Eve detects a bit that is different from Alice's transmitted bit. As a result, it can be expressed as \cite{trinh2018design}
\begin{align}
    \label{eqn:qber_eve}
    p_\text{e} = P_{\text{A}, \text{E}} \left( 0, 1\right) + P_{\text{A}, \text{E}} \left( 1, 0\right),
\end{align}
where $P_{\text{A}, \text{E}} \left( a, b\right)$ denotes the joint probability that Alice transmits bit '$a$' and Eve detects bit '$b$'. As Eve detects the received signals with the optimal threshold $d_\text{E, th}=0$ to maximize the obtained information, the probability can be computed as \cite{trinh2018design}
\begin{align}
    \label{eqn:P_A_E}
    &P_{\text{A}, \text{E}} \left( \! 0, 1 \! \right) \! = \! P_{\text{A}, \text{E}} \left( \! 1, 0 \!\right) \! = \! \frac{1}{2}  \! \int\limits_{0}^{\infty} \! Q \! \left( \! \frac{\frac{1}{4} \mathfrak{R} \delta P_\text{t} h_\text{E}}{\sigma_\text{n}} \! \right) \! f_{h_\text{E}} \! \left( \! h_\text{E} \! \right)  \!\mathrm{d} h_\text{E}.
\end{align}
Similarly, applying \eqref{eqn:composite_pdf_Eve} to \eqref{eqn:P_A_E}, changing the variable to $h'_\text{E} = \frac{h_\text{E}}{h_\text{c} A_\text{mod}}$, and utilizing the Gaussian-Laguerre polynomial approximation, we can derive the numerical expression as in \eqref{eqn:closed_form_ber_eve}. This completes the proof.

\section{Proof of Lemma \ref{LEM:CDF_P}}
\label{sec:proof_cdf_p}

Let $\gamma$ be an arbitrary value in the range of $\left[ 0, 0.5 \right]$. For a given value of $\rho$, the probability that the estimate, $\hat{\rho}$, is smaller than $\gamma$ is
\begin{align}
    \label{eqn:cdf_p}
    P \left( \hat{\rho} \leq \gamma \middle| \rho \right) = \text{Pr} \! \left[ \! \phi^{-1} \! \left( \! \hat{\rho} \! \right)  \!\leq\! \phi^{-1}\! \left( \gamma \right) \right] \!=\! \text{Pr} \left[ \! w \! \leq \! \phi^{-1} \! \left( \! \gamma\! \right) \right],
\end{align}
The first transformation is conducted since the function $\phi \left( \cdot \right)$ is strictly increasing in the interval $\left[ 0, 0.5 \right]$. Let $W$ be the random variable representing the weight of the syndrome $\textbf{S}$, whose probability mass function (PMF) is written as
\begin{align}
    \label{eqn:pmf_W}
    f_W \left( w; \rho, m \right) = 
    \left(
    \begin{array}{c}
      m \\
      w
    \end{array}
  \right) \phi^w \left( \rho \right) \left[ 1 - \phi^w \left( \rho \right) \right]^{m - w},
\end{align}
for $0 \leq w \leq m$. It is observed that the random variable $W$ follows the binomial distribution. i.e., $W \sim \text{Binomial} \left( m, f \left( \rho \right) \right)$. As a result, the CDF of $W$ is expressed as \cite{wadsworth1960introduction}
\begin{align}
    \label{eqn:cdf_w}
    F_W \left( k; \rho, m \right) = \text{Pr} \left[ W \leq k \right] = I_{1 - \phi \left( \rho \right)} \left( m - k, k + 1 \right),
\end{align}
where $I_\cdot \left( \cdot, \cdot \right)$ is the regularized incomplete beta function. Combining \eqref{eqn:cdf_p} and \eqref{eqn:cdf_w} yields \eqref{eqn:cdf_p_final} and completes the proof.
} 


\bibliographystyle{IEEEtran}
\bibliography{references}

\end{document}